\newtheorem{lemma}{Lemma}[section]
\newtheorem{theorem}[lemma]{Theorem}
\newtheorem{corollary}[lemma]{Corollary}
\newtheorem{remark}[lemma]{Remark}
\newtheorem{observation}{Observation}
\newcommand{\lemref}[1]{Lemma~\ref{lemma:#1}}
\newcommand{\corref}[1]{Corollary~\ref{cor:#1}}
\newcommand{\lemlab}[1]{\label{lemma:#1}}
\newcommand{\corlab}[1]{\label{cor:#1}}
\newcommand{\opt}{\textrm{\sc OPT}}
\newcommand{\etal}{et al.\ }
\newcommand{\eps}{\epsilon}
\newcommand{\st}{S} 
\newcommand{\Algorithm}[1]{{\texttt{\bf{#1}}}} 
\newcommand{\sbg}{\Algorithm{SSF-W}} 
\newcommand{\sug}{\Algorithm{SSF}} 
\newcommand{\mmug}{\Algorithm{SSF-ID}} 
\begin{document}
\title{Online Scheduling to Minimize the Maximum Delay Factor}
\author{
Chandra Chekuri\thanks{Department of Computer Science,
University of Illinois, 201 N.\ Goodwin Ave.,
Urbana, IL 61801. {\tt chekuri@cs.uiuc.edu}.
Partially supported by NSF grants CCF 0728782
 and CNS 0721899. }
\and
Benjamin Moseley\thanks{Department of Computer Science,
University of Illinois, 201 N.\ Goodwin Ave.,
Urbana, IL 61801. {\tt bmosele2@uiuc.edu}.
}
}
\date{\today}
\maketitle

\begin{abstract}
  In this paper two scheduling models are addressed. First is the
  standard model (unicast) where requests (or jobs) are
  independent. The other is the broadcast model where broadcasting a
  page can satisfy multiple outstanding requests for that page. We
  consider online scheduling of requests when they have deadlines.
  Unlike previous models, which mainly consider the objective of
  maximizing throughput while respecting deadlines, here we focus on
  scheduling all the given requests with the goal of minimizing the
  maximum {\em delay factor}. The delay factor of a schedule is
  defined to be the minimum $\alpha \ge 1$ such that each request $i$
  is completed by time $a_i + \alpha(d_i - a_i)$ where $a_i$ is the
  arrival time of request $i$ and $d_i$ is its deadline. Delay factor
  generalizes the previously defined measure of maximum stretch which
  is based only the processing times of requests
  \cite{BenderCM98,BenderMR02}.

  We prove strong lower bounds on the achievable competitive ratios
  for delay factor scheduling even with unit-time requests. Motivated
  by this, we then consider resource augmentation analysis
  \cite{KalyanasundaramP95} and prove the following positive results.
  For the unicast model we give algorithms that are $(1 + \eps)$-speed
  $O({1 \over \eps})$-competitive in both the single machine and
  multiple machine settings. In the broadcast model we give an
  algorithm for similar-sized pages that is $(2+ \eps)$-speed $O({1
    \over \eps^2})$-competitive. For arbitrary page sizes we give an
  algorithm that is $(4+\eps)$-speed $O({1 \over
    \eps^2})$-competitive.
\end{abstract}

\setcounter{page}{0}
\thispagestyle{empty}
\clearpage

\section{Introduction}
Scheduling requests (or jobs\footnote{In this paper we use requests
  instead of jobs since we also address the broadcast scheduling
  problem where a request for a page is more appropriate terminology
  than a job.}) that arrive online is a fundamental problem faced by
many systems and consequently there is a vast literature on this
topic. A variety of models and performance metrics are studied in
order to capture the requirements of a system. In this work, we
consider a recently suggested performance measure called {\em delay
  factor} \cite{ChangEGK08} when each request has an {\em arrival
  time} (also referred to as release time) and a {\em deadline}. We
consider both the traditional setting where requests are independent,
and also the more recent setting of broadcast scheduling when
different requests may ask for the same page (or data) and can be
simultaneously satisfied by a single transmission of the page. We
first describe the traditional setting, which we refer to as the {\em
  unicast} setting, to illustrate the definitions and and then
describe the extension to the {\em broadcast} setting.

We assume that requests arrive {\em online}. The arrival time $a_i$,
the deadline $d_i$, and the processing time $\ell_i$ of a request
$J_i$ are known only when $i$ arrives. We refer to the quantity $S_i =
(d_i-a_i)$ as the {\em slack} of request $i$. There may be a single
machine or $m$ identical machines available to process the
requests. Consider an online scheduling algorithm $A$. Let $f_i$
denote the completion time or finish time of $J_i$ under $A$. Then the
delay factor of $A$ on a given sequence of requests $\sigma$ is
defined as $\alpha^A(\sigma) = \max\{1, \max_{J_i \in \sigma}
\frac{f_i - a_i}{d_i - a_i}\}$. In other words $\alpha^A$ measures the
factor by which $A$ has delayed jobs in {\em proportion} to their
slack. The goal of the
scheduler is to minimize the (maximum) delay factor. We consider
worst-case competitive analysis. An online algorithm $A$ is
$r$-competitive if for all request sequences $\sigma$,
$\alpha^A(\sigma) \le r \alpha^*(\sigma)$ where $\alpha^*(\sigma)$ is
the delay factor of an optimal offline algorithm.  Delay factor
generalizes the previously studied maximum stretch measure introduced
by Bender, Chakraborty and Muthukrishnan \cite{BenderCM98}. The
maximum stretch of a schedule $A$ is $\max_{J_i \in \sigma} (f_i -
a_i)/\ell_i$ where $\ell_i$ is the length or processing time of
$J_i$. By setting $d_i =a_i + \ell_i$ for each request $J_i$ it can be
seen that delay factor generalizes maximum stretch.

In the broadcast setting, multiple requests can be satisfied by the
same transmission. This model is inspired by a number of recent
applications --- see \cite{BarnoyBNS98, AksoyF98, AcharyaFZ95,
  BartalM00} for the motivating applications and the growing
literature on this topic. More formally, there are $n$ distinct pages
or pieces of data that are available in the system, and clients can
request a specific page at any time. This is called the {\em
  pull-model} since the clients initiate the request and we focus on
this model in this paper (in the push-model the server transmits the
pages according to some frequency). Multiple outstanding requests for
the same page are satisfied by a single transmission of the page. We
use $J_{(p,i)}$ to denote $i$'th request for a page $p \in \{1, 2,
\ldots, n\}$. We let $a_{(p,i)}$ and $d_{(p,i)}$ denote the arrival
time and deadline of the request $J_{(p,i)}$. The finish time
$f_{(p,i)}$ of a request $J_{(p,i)}$ is defined to be the earliest time after
$a_{(p,i)}$ when the page $p$ is sequentially transmitted by the
scheduler. Note that multiple requests for the same page can have the
same finish time. The delay factor $\alpha^A$ for an algorithm $A$ over
a sequence of requests $\sigma$ is now defined as $ \max\{1, \max_{(p,i) \in
  \sigma}\frac{f_{(p,i)} - a_{(p,i)}}{d_{(p,i)} - a_{(p,i)}}\}$.

\medskip
\noindent {\bf Motivation:} There are a variety of metrics in the
scheduling literature and some of the well-known and widely used ones
are makespan and average response time (or flowtime). More recently,
other metrics such as maximum and average stretch, which measure the
waiting time in proportion to the size of a request, have been
proposed \cite{BenderCM98, Karger99, Sgall98}; these measures were
motivated by applications in databases and web server systems. Related
metrics include $L_p$ norms of response times and stretch
\cite{BansalP03, AvrahamiA03, ChekuriGKK04} for $1 \le p < \infty$. In
a variety of applications such as real-time systems and data gathering
systems, requests have deadlines by which they desire to be
fulfilled. In real-time systems, a {\em hard} deadline implies that it
cannot be missed, while a {\em soft} deadline implies some flexibility
in violating it. In online settings it is difficult to respect hard
deadlines. Previous work has addressed hard deadlines by either
considering periodic tasks or other restrictions \cite{Burnsb08}, or
by focusing on maximizing throughput (the number of jobs completed by
their deadline) \cite{Kimc04, ChanLTW04, ZhengFCCPW06}. It was
recently suggested by Chang \etal \cite{ChangEGK08} that delay factor
is a useful and natural relaxation to consider in situations with soft
deadlines where we desire all requests to be satisfied. In addition,
as we mentioned already, delay factor generalizes maximum stretch
which has been previously motivated and studied in
\cite{BenderCM98,BenderMR02}.

\medskip
\noindent {\bf Results:} We give the first results for {\em online}
scheduling for minimizing delay factor in both the unicast and
broadcast settings. Throughout we assume that requests are allowed to
be {\em preempted} if they have varying processing times. We first
prove strong lower bounds on online competitiveness.
\begin{itemize}
\item For unicast setting no online algorithm is
  $\Delta^{0.4}/2$-competitive where $\Delta$ is the ratio between
  the maximum and minimum slacks.
\item For broadcast scheduling with $n$ unit-sized pages there is no
$n/4$-competitive algorithm.
\end{itemize}
We resort to resource augmentation analysis, introduced by of
Kalyanasundaram and Pruhs \cite{KalyanasundaramP95}, to overcome the
above lower bounds. In this analysis the online algorithm is given
faster machines than the optimal offline algorithm. For $s \ge 1$, an
algorithm $A$ is $s$-speed $r$-competitive if $A$ when given $s$-speed
machine(s) achieves a competitive ratio of $r$. We prove the
following.
\begin{itemize}
\item For unicast setting, for any $\eps \in (0,1]$, there are $(1+\eps)$-speed
  $O(1/\eps)$-competitive algorithms in both single and multiple machine
  cases. Moreover, the algorithm for the multiple machine case immediately
  dispatches an arriving request to a machine and is non-migratory.
\item For broadcast setting, for any $\eps \in (0,1]$, there is a
  $(2+\eps)$-speed $O(1/\eps^2)$-competitive algorithm for unit-sized
  (or similar sized) pages.  If pages can have varying length, then
  for any $\eps \in (0,1]$, there is a $(4+\eps)$-speed
  $O(1/\eps^2)$-competitive algorithm.
\end{itemize}

Our results for the unicast setting are related to, and borrow
ideas from, previous work on minimizing $L_p$ norms of response
time and stretch \cite{BansalP03} in the single machine and
parallel machine settings \cite{AvrahamiA03,ChekuriGKK04}.

Our main result is for broadcast scheduling. Broadcast scheduling has
posed considerable difficulties for algorithm design. In fact most of
the known results are for the {\em offline} setting
\cite{KalyanasundaramPV00,ErlebachH02,GandhiKKW04,GandhiKPS06,BansalCS06,BansalCKN05}
and several of these use resource augmentation! The difficulty in
broadcast scheduling arises from the fact that the online algorithm
may transmit a page multiple times to satisfy distinct requests for
the same page, while the offline optimum, which knows the sequence in
advance, can {\em save work} by gathering them into a single
transmission. Online algorithms that maximize throughput
\cite{Kimc04,ChanLTW04,ZhengFCCPW06,ChrobakDJKK06} get around this by
eliminating requests. Few positive results are known in the online
setting where all requests need to be scheduled
\cite{BartalM00,EdmondsP03,EdmondsP04} and the analysis in all of
these is quite non-trivial.
In contrast, our algorithm and analysis
are direct and explicitly demonstrate the value of making requests
wait for some duration so as to take advantage of potential future
requests for the same page. We hope this idea can be further exploited
in other broadcast scheduling contexts. We mention that even in the
{\em offline} setting, only an LP-based $2$-speed algorithm is known
for delay factor with unit-sized pages \cite{ChangEGK08}.

\medskip
\noindent {\bf Related Work:} We refer the reader to the survey on
online scheduling by Pruhs, Sgall and Torng \cite{PruhsST} for a
comprehensive overview of results and algorithms (see also
\cite{Pruhs07}). For jobs with deadlines, the well-known
earliest-deadline-first (EDF) algorithm can be used in the offline
setting to check if all the jobs can be completed before their
deadline. A substantial amount of literature exists in the real-time
systems community in understanding and characterizing restrictions on
the job sequence that allow for schedulability of jobs with deadlines
when they arrive online or periodically. Previous work on soft
deadlines is also concerned with characterizing inputs that allow for
bounded tardiness. We refer the reader to
\cite{RealtimeHandbook} for the extensive literature
scheduling issues in real-time systems.

Closely related to our work is that on max stretch \cite{BenderCM98}
where it is shown that no online algorithm is $O(P^{0.3})$ competitive
even in the preemptive setting where $P$ is ratio of the largest job
size to the smallest job size. \cite{BenderCM98} also gives an
$O(\sqrt{P})$ competitive algorithm which was further refined in
\cite{BenderMR02}. Resource augmentation analysis for $L_p$ norms of
response time and stretch from the work of Bansal and Pruhs
\cite{BansalP03} implicitly shows that the shortest job first (SJF)
algorithm is a $(1+\eps)$-speed $O(1/\eps)$-competitive algorithm for
max stretch. Our work shows that this analysis can be generalized for
the delay factor metric. For multiple processors our analysis is
inspired by the ideas from \cite{AvrahamiA03,ChekuriGKK04}.

Broadcast scheduling has seen a substantial amount of research in
recent years; apart from the work that we have already cited we refer
the reader to \cite{CharikarK06,KhullerK04}, the recent paper of Chang
\etal \cite{ChangEGK08}, and the surveys \cite{PruhsST,Pruhs07} for
several pointers to known results. Our work on delay factor is
inspired by \cite{ChangEGK08}.  As we mentioned already, a good
amount of the work on broadcast scheduling has been on offline
algorithms including NP-hardness results and approximation algorithms
(often with resource augmentation). For delay factor there is a
$2$-speed optimal algorithm in the {\em offline} setting and it is
also known that unless $P=NP$ there is no $2-\eps$ approximation
\cite{ChangEGK08}. In the online setting the following results are
known.  For maximum response time, it is shown in
\cite{BartalM00,ChangEGK08} that first-in-first-out (FIFO) is
$2$-competitive.  For average response time,
Edmonds and Pruhs \cite{EdmondsP03} give a $(4+\eps)$-speed
$O(1/\eps)$-competitive algorithm; their algorithm is an indirect
reduction to a complicated algorithm of Edmonds \cite{Edmonds00} for
non-clairvoyant scheduling. They also show in \cite{EdmondsP04} that
longest-wait-first (LWF) is a $6$-speed $O(1)$-competitive algorithm
for average response time.  Constant competitive online
algorithms for maximizing throughput
\cite{Kimc04,ChanLTW04,ZhengFCCPW06,ChrobakDJKK06} for unit-sized
pages.

\medskip
\noindent
We describe our results for the unicast setting
in Section~\ref{sec:unicast} and for the broadcast settings in
Section~\ref{sec:broadcast}.

\medskip
\noindent
{\bf Notation:}
We let $S_i = d_i - a_i$ denote the slack of $J_i$ in the unicast
setting. When requests have varying processing times (or lengths) we
use $\ell_i$ to denote the length of $J_i$. We assume without loss
of generality that $S_i \ge \ell_i$.
In the broadcast setting, $(p,i)$ denotes the $i$'th request for
page $p$. We assume that the requests for a page are ordered by time
and hence $a_{(p,i)} \le a_{(p,j)}$ for $i < j$.
In both settings we use $\Delta$ to denote the ratio of maximum slack
to the minimum slack in a given request sequence.

\section{Unicast Scheduling}
\label{sec:unicast}
In this section we address the unicast case where requests are
independent. We may thus view requests as jobs although we stick with
the use of requests. For a request $J_i$, recall that
$a_i,d_i,\ell_i,f_i$ denote the arrival time, deadline, length, and
finish time respectively. An instance with all $\ell_i = 1$ (or more
generally the processing times are the same) is referred to as a
unit-time instance. It is easy to see that preemption does not help
much for unit-sized instances. Assuming that the processing times are
integer valued then in the single machine setting one can reduce an
instance with varying processing time to an instance with unit-times
as follows. Replace $J_i$, with length $\ell_i$, by $\ell_i$
unit-sized requests with the same arrival and deadline as that of
$J_i$.

As we had remarked earlier, scheduling to minimize the maximum stretch
is a special case of scheduling to minimize the maximum delay factor.
In \cite{BenderCM98} a lower bound of $P^{1/3}$ is shown for online
maximum stretch on a $1$-speed machine where $P$ is the ratio of the
maximum processing time to the minimum processing time. They show that
this bounds holds even when $P$ is known to the algorithm.  This
implies a lower bound of $\Delta^{1/3}$ for minimizing the maximum delay
factor. Here we improve the lower bound for maximum stretch to
$P^{0.4}/2$ when the online algorithm is not aware of $P$. A proof
can be found in the appendix.

\begin{theorem}
  \label{thm:unicast_lb}
  There is no $1$-speed ${{P^{.4}} \over 2}$-competitive algorithm
  for online maximum stretch when $P$ is not known in advance to
  the algorithm.
\end{theorem}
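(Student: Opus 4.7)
The proof is by an adversary argument that exploits the algorithm's ignorance of $P$. The adversary builds the request sequence incrementally, deciding whether to terminate or continue on the basis of the online algorithm's behavior so far; this adaptive power over the fixed-$P$ setting is what gives the improved exponent $0.4$ over the $P^{1/3}$ lower bound of Bender, Chakraborty, and Muthukrishnan~\cite{BenderCM98}.

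\textbf{Construction.} I would use two job-length scales: ``big'' jobs of length roughly $P$ and ``small'' jobs of length $1$. At time $0$ the adversary releases a batch of big jobs and monitors the online algorithm's cumulative service on them over an initial window of length $\tau$. At the end of the window the adversary forks. If the algorithm has devoted less than a threshold $W^\ast$ to the big jobs, the adversary stops the sequence: the big-job backlog then already forces stretch at least $P^{0.4}/2$ on some big job, while the offline optimum (knowing there are no further requests) serves them in arrival order at stretch $O(1)$. Otherwise, the adversary releases at time $\tau$ a large batch of unit-length jobs; the online algorithm, now burdened both with a residue of big-job work and a surge of small jobs, cannot meet stretch $P^{0.4}/2$ on both classes, whereas the offline optimum, knowing the whole sequence in advance, delays the big jobs by just enough to serve the small batch first with stretch $O(1)$ on both classes. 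The batch sizes, the window $\tau$, and the threshold $W^\ast$ are chosen to balance the two branches to the same $P^{0.4}/2$ bound; the exponent $2/5$ falls out of optimizing this balance (with the factor $2$ absorbing the lower-order slack).

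\textbf{Main obstacle.} The delicate steps are (a) the parameter balancing that makes both branches of the fork yield the same stretch bound and (b) the case analysis on the algorithm's service pattern. Since the algorithm may preempt freely, the charging must be against aggregate service on each size class rather than instantaneous state, and the argument must allow arbitrary interleavings of big-job and small-job processing. The improvement over $P^{1/3}$ crucially uses that $P$ is not revealed to the algorithm: if $P$ were known the algorithm could tune its service split to the eventual horizon, but under the unknown-$P$ assumption the adversary's right to choose the fork after observing the algorithm translates into an extra $P^{0.4-1/3}$ slack in the lower bound.
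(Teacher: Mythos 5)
There is a genuine gap here, and it is the central idea of the proof rather than a detail of parameter balancing. Your construction uses only two length scales ($P$ and $1$), but with two scales the unknown-$P$ assumption cannot be exploited: at the moment of your fork the instance released so far consists of a homogeneous batch of big jobs, so its observed size-ratio is $1$ (degenerate -- no competitive-ratio constraint of the form $(P')^{0.4}/2$ with $P'>1$ can be invoked), and the instant the unit jobs appear the observed ratio already jumps to the full $P$, at which point the algorithm is entitled to the whole $P^{0.4}/2$ slack and you are back to the known-$P$ setting where only $P^{1/3}$ is achievable. The paper's proof needs \emph{three} scales precisely for this reason: a single job of length $P$ at time $0$, a saturating stream of jobs of length $P^{0.6}$ (slack $P^{0.6}$) arriving back-to-back up to time $P^{1.16}$, and only then a stream of unit jobs. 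At the fork time $P^{1.16}$ the truncated instance has spread exactly $P^{0.4}$, so a $(\cdot)^{0.4}/2$-competitive algorithm must have stretch at most $(P^{0.4})^{0.4}=P^{0.16}$ against an offline FIFO schedule of stretch $2$; this is what \emph{forces} the algorithm to complete the length-$P$ job by time $P^{1.16}$, which in turn leaves it with a $P^{0.6}$ backlog of medium jobs that the subsequent unit-job stream of length $P^{1.2}$ converts into stretch $P^{0.6}$, versus the adversary's $2P^{0.2}$. The exponent $0.4=0.6-0.2$ comes out of this three-scale bookkeeping, not out of balancing two branches of a two-scale fork.

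Two further concrete problems with your sketch. First, your fork condition (``has the algorithm devoted at least $W^\ast$ of service to the big jobs over $[0,\tau]$?'') is vacuous: nothing else has been released in $[0,\tau]$, so every non-idling algorithm devotes all of $[0,\tau]$ to the big jobs and the first branch never fires. The paper's fork condition is instead ``has the single big job been \emph{completed} by time $P^{1.16}$?'', and completion is forced by the competitiveness requirement on the truncated instance, not assumed. Second, your first branch claims that stopping early ``forces stretch at least $P^{0.4}/2$ on some big job, while the offline optimum serves them in arrival order at stretch $O(1)$'' -- but for a batch of identical jobs all released at time $0$ the online algorithm can also serve them in arrival order, so there is no gap in that branch at all. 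You correctly identified that adaptivity and the unknown-$P$ assumption are the levers, but without the intermediate scale neither lever actually moves anything.
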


\begin{corollary}
\label{delayonlylower}
There is no $1$-speed ${{\Delta^{.4}} \over 2}$-competitive algorithm
for delay factor scheduling when $\Delta$ is not known in advance with unit-time requests.
\end{corollary}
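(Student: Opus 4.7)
The plan is a direct online reduction from max stretch with unknown $P$ to unit-time delay factor with unknown $\Delta$, so that any algorithm beating $\Delta^{0.4}/2$ for the latter would beat $P^{0.4}/2$ for the former, contradicting \theoref{unicast_lb}.

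Given a hypothetical $1$-speed algorithm $A$ that is $\Delta^{0.4}/2$-competitive for online unit-time delay factor (without knowing $\Delta$ in advance), I would build an online algorithm $B$ for max stretch as follows. Whenever a max-stretch request $J_i$ with arrival $a_i$ and length $\ell_i$ arrives, $B$ immediately issues $\ell_i$ unit-time requests to $A$, each with arrival $a_i$ and deadline $a_i+\ell_i$ (so each has slack $\ell_i$). At every time step $B$ processes the original request whose unit piece $A$ decides to schedule; preemption in $A$ translates to preemption in $B$, and since the pieces of $J_i$ are interchangeable $B$ is well-defined. Note that $B$ never needs to know $P$ in advance, precisely because $A$ never needs to know $\Delta$ in advance, and in the constructed instance $\Delta = P$.

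The key correspondence is the following. For each original request $J_i$, its completion time under $B$ is exactly the completion time $f'_i$ of the last of its $\ell_i$ unit pieces under $A$, and that last piece has slack $\ell_i$, so its delay factor under $A$ equals $(f'_i-a_i)/\ell_i$, which is precisely the stretch of $J_i$ under $B$. Thus $\alpha^{B}(\sigma)$ for the max-stretch instance $\sigma$ equals $\alpha^{A}(\sigma')$ on the corresponding unit-time delay-factor instance $\sigma'$. Conversely, any preemptive offline schedule for $\sigma$ with max stretch $\alpha^{\ast}$ can be turned slot-by-slot into a schedule for $\sigma'$ with max delay factor at most $\alpha^{\ast}$: assign each time slot in which the offline schedule processes $J_i$ to the next unfinished unit piece of $J_i$, so that the last piece finishes no later than $J_i$ did. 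Hence the offline optimal delay factor on $\sigma'$ is at most the offline optimal max stretch on $\sigma$.

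Combining the two directions, if $A$ is $c$-competitive on $\sigma'$ with $c \le \Delta^{0.4}/2 = P^{0.4}/2$, then $B$ is $c$-competitive for max stretch on $\sigma$ with the same bound in terms of $P$, and $B$ does not require foreknowledge of $P$. This contradicts \theoref{unicast_lb}, proving the corollary. I do not expect any real obstacle here; the only point that needs a line of care is the offline direction, which is immediate because unit pieces of the same original request are interchangeable and the last piece's finish time upper bounds the original request's finish time.
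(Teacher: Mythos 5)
Your reduction is correct and is exactly the argument the paper intends: it sets $d_i = a_i+\ell_i$ so that delay factor specializes to max stretch, and splits each job of length $\ell_i$ into $\ell_i$ unit-time requests with the same arrival and deadline (so $\Delta = P$ on the hard instance), which is precisely the reduction described in Sections~1 and~2 just before the corollary. Your extra care with the offline direction and the simulation of $A$ by $B$ is sound but does not change the route.
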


In the next two subsections we show that with $(1+\eps)$ resource
augmentation simple algorithms achieve an $O(1/\eps)$ competitive
ratio.

\subsection{Single Machine Scheduling}
\label{unicast-single}

We analyze the simple shortest-slack-first ($\sug$) algorithm
which at any time $t$ schedules the request with the shortest slack.

\begin{center}
\begin{tabular}[r]{|c|}
\hline
\textbf{Algorithm}: \sug \\

\begin{minipage}{13cm}
\begin{itemize}
\item At any time $t$ schedule the request with with the minimum stretch
which has not been satisfied.
\end{itemize}
\end{minipage}\\\\

\hline
\end{tabular}
\end{center}

\begin{theorem}
\label{thm:ssf-single-machine}
The algorithm $\sug$ is $(1 + \eps)$-speed $({1 \over \eps})$-competitive
for minimizing the maximum delay factor in unicast scheduling.
\end{theorem}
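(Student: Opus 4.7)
The plan is to bound the delay factor of the single worst request in $\sug$'s schedule. Let $J_i$ be the request that realizes $\sug$'s maximum delay factor $\beta = (f_i - a_i)/S_i$, and let $\alpha^*$ denote the optimal offline delay factor. The goal is to show $\beta \le \alpha^*/\eps$, which gives the claimed competitive ratio $1/\eps$. Since delay factor is a max objective, it suffices to reason about this one request and the work that piled up behind it.

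The central device is a backward-looking interval. I would let $t'$ be the latest time $\le f_i$ at which $\sug$ is either idle or is running a request with slack strictly greater than $S_i$. Because $J_i$ is available throughout $[a_i, f_i]$ and has slack exactly $S_i$, $\sug$ never idles after $a_i$ and would preempt any slack-$> S_i$ request in favor of $J_i$; hence $t' \le a_i$ and throughout the window $(t', f_i]$ the machine works only on requests of slack at most $S_i$. Let $\mathcal{B}$ denote the set of requests $\sug$ touches in this interval. The key claim is $a_k \ge t'$ for every $J_k \in \mathcal{B}$: if such a $J_k$ were already pending at time $t'^-$, then $\sug$ (shortest-slack-first) would have preferred it over any slack-$> S_i$ request and would not have been idle, contradicting the definition of $t'$.

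With these structural properties in hand, the rest is a volume comparison. Because the speed-$(1+\eps)$ machine is continuously busy on $\mathcal{B}$ throughout $(t', f_i]$, the total processing length of the jobs in $\mathcal{B}$ is at least $(1+\eps)(f_i - t')$. Every $J_k \in \mathcal{B}$ satisfies $a_k \le f_i$ and $S_k \le S_i$, so an optimum with delay factor $\alpha^*$ must complete $J_k$ by $a_k + \alpha^* S_k \le f_i + \alpha^* S_i$, and it also cannot begin $J_k$ before $a_k \ge t'$. Running at unit speed, OPT must therefore fit all of $\mathcal{B}$'s work into the interval $[t', f_i + \alpha^* S_i]$, which gives $(1+\eps)(f_i - t') \le f_i + \alpha^* S_i - t'$. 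Rearranging yields $\alpha^* S_i \ge \eps(f_i - t') \ge \eps(f_i - a_i)$, i.e., $\beta \le \alpha^*/\eps$, as desired.

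The step I expect to be the main obstacle is the setup of $t'$ together with the ``no short-slack survivors from before $t'$'' claim under preemption: one has to rule out the possibility that a short-slack request is only partially completed across the boundary at $t'$, and this is precisely why $t'$ must be chosen as the \emph{latest} non-conforming moment, not the first. Once that is nailed down, the remainder is a routine interval-accounting inequality, and the argument is insensitive to the reduction from variable- to unit-length requests alluded to earlier, since it only ever references the total volume of work $\sug$ performs on slack-$\le S_i$ requests.
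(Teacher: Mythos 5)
Your proposal is correct and follows essentially the same argument as the paper: identify the witness request $J_i$, walk back to the last time $t'$ before $a_i$ at which the machine was not working on slack-$\le S_i$ requests, observe that all such work arriving in $(t', f_i]$ totals $(1+\eps)(f_i - t')$, and compare against what a unit-speed optimum can process. The only cosmetic difference is that you package the conclusion as the interval-packing inequality $(1+\eps)(f_i - t') \le f_i + \alpha^* S_i - t'$, whereas the paper phrases it as a lower bound on the optimum's finishing time; these are equivalent.
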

\begin{proof}
  Consider an arbitrary request sequence $\sigma$ and let $\alpha$ be
  the maximum delay factor achieved by $\sug$ on $\sigma$. If $\alpha
  = 1$ there is nothing to prove, so assume that $\alpha > 1$.  Let
  $J_i$ be the request that witnesses $\alpha$, that is $\alpha = (f_i
  - a_i)/S_i$. Note that $\sug$ does not process any request with slack
  more than $S_i$ in the interval $[a_i, f_i]$. Let $t$ be the largest
  value less than or equal to $a_i$ such that $\sug$ processed only
  requests with slack at most $S_i$ in the interval $[t, f_i]$. It
  follows that $\sug$ had no requests with slack $\le S_i$ just before
  $t$. The total work that $\sug$ processed in $[t,f_i]$ on requests
  with slack less than equal to $S_i$ is $(1+\eps)(f_i-t)$ and all
  these requests arrive in the interval $[t,f_i]$. An optimal offline
  algorithm with $1$-speed can do total work of at most $(f_i-t)$ in
  the interval $[t,f_i]$ and hence the earliest time by which it can
  finish these requests is $f_i + \eps (f_i-t) \ge f_i + \eps(f_i -
  a_i)$. Since all these requests have slack at most $S_i$ and have
  arrived before $f_i$, it follows that $\alpha^* \ge \eps(f_i -
  a_i)/S_i$ where $\alpha^*$ is the maximum delay factor of the
  optimal offline algorithm with $1$-speed machine. Therefore, we have
  that $\alpha/\alpha^* \le 1/\eps$.
\end{proof}

\begin{remark}
  For unit-time requests, the algorithm that {\em non-preemptively}
  schedules requests with the shortest slack is a $(1+\eps)$-speed $2
  \over \eps$-competitive for maximum delay factor.
\end{remark}

\subsection{Multiple Machine Scheduling}
\label{unicast-multi}

We now consider delay factor scheduling when there are $m$
machines. To adapt $\sug$ to this setting we take intuition from
previous work on minimizing $L_p$ norms of flow time and stretch
\cite{BansalP03,AvrahamiA03,ChekuriGKK04}. We develop an algorithm
that immediately dispatches an arriving request to a machine, and
further does not migrate an assigned request to a different
machine once it is assigned. Each machine essentially runs the
single machine $\sug$ algorithm and thus the only remaining
ingredient to describe is the dispatching rule. For this purpose
the algorithm groups requests into classes based on their slack. A
request $J_i$ is said to be in class $k$ if $S_i \in [2^k,
2^{k+1})$. The algorithm maintains the total processing time of
requests (referred to as {\em volume}) that have been assigned to
machine $x$ in each class $k$. Let $U^x_{=k}(t)$ denote the total
processing time of requests assigned to machine $x$ by time $t$ of class $k$.
With this notation, the algorithm $\mmug$ (for $\sug$ with
immediate dispatch) can be described.

\begin{center}
\begin{tabular}[r]{|c|}
\hline
\textbf{Algorithm}: \mmug \\

\\

\begin{minipage}{13cm}
\begin{itemize}
\item When a new request $J_i$ of class $k$ arrives at time $t$,
assign it to a machine $x$ where $U^x_{=k}(t) = \min_y U^y_{=k}(t)$.
\item Use $\sug$ on each machine separately.
\end{itemize}
\end{minipage}\\\\

\hline
\end{tabular}
\end{center}

The rest of this section is devoted to the proof of the following
theorem.
\begin{theorem}
\label{thm:multiple}
$\mmug$ is a $(1+ \eps)$-speed $O({1 \over
\eps})$-competitive algorithm for online delay factor
scheduling on $m$ machines.
\end{theorem}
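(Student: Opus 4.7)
My plan is to extend the single-machine argument of Theorem~\ref{thm:ssf-single-machine} to $m$ machines by analyzing the machine on which the witness request is processed, and by exploiting the fact that the immediate-dispatch rule of $\mmug$ keeps the per-class volumes across machines balanced up to one job size. Fix an input sequence and let $\alpha > 1$ be the delay factor achieved by $\mmug$, witnessed by request $J_i$, so $\alpha = (f_i - a_i)/S_i$. Let $x$ denote the machine to which $J_i$ was dispatched and let $k$ be the class of $J_i$, so that $S_i \in [2^k, 2^{k+1})$.

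First I would mimic the single-machine choice of $t$ on machine $x$: take $t \le a_i$ to be the largest time such that $x$ is continuously busy on requests of slack at most $S_i$ throughout $[t, f_i]$. Maximality forces $x$ to have no slack-$\le S_i$ request pending just before $t$, so every slack-$\le S_i$ request processed by $x$ in $[t,f_i]$ must have been dispatched to $x$ during $[t,f_i]$, and the total volume processed on $x$ in this interval equals $(1+\eps)(f_i-t)$.

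The key new ingredient is a per-class load-balancing invariant. An induction on the class-$k'$ arrival sequence shows that under the min-$U^y_{=k'}$ dispatch rule,
\[
\max_y U^y_{=k'}(\tau) - \min_y U^y_{=k'}(\tau) \;\le\; \ell_{\max,k'} \;\le\; 2^{k'+1}
\]
at every time $\tau$, since assigning an arriving job of size $\ell$ to the current min can open a gap of at most $\ell$. Because the average $V_{k'}^{[0,\tau]}/m$ lies between the class-$k'$ min and max, every individual $U^x_{=k'}(\tau)$ lies within $\ell_{\max,k'}$ of this average. Applying this at $\tau = f_i$ and $\tau = t^-$ and subtracting, I obtain for each $k'\le k$
\[
A^x_{k'}[t,f_i] \;:=\; U^x_{=k'}(f_i) - U^x_{=k'}(t^-) \;\le\; \frac{V_{k'}[t,f_i]}{m} + 2\ell_{\max,k'},
\]
where $V_{k'}[t,f_i]$ is the total volume of class-$k'$ requests arriving in $[t,f_i]$.

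To conclude, every slack-$\le S_i$ request processed by $x$ in $[t,f_i]$ belongs to some class $k' \le k$, so summing over $k'\le k$ and using $\sum_{k'\le k} 2\ell_{\max,k'} \le 8\cdot 2^k \le 8S_i$ gives
\[
(1+\eps)(f_i - t) \;\le\; \sum_{k'\le k} A^x_{k'}[t,f_i] \;\le\; \frac{V_{\le k}[t,f_i]}{m} + 8S_i.
\]
Every request counted by $V_{\le k}[t,f_i]$ arrived in $[t,f_i]$ and has slack at most $2S_i$, so the optimal $m$-machine unit-speed schedule must finish all of them by $f_i + 2\alpha^* S_i$; hence $V_{\le k}[t,f_i] \le m(f_i - t + 2\alpha^* S_i)$. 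Combining the two inequalities yields $\eps(f_i - t) \le (2\alpha^*+8)S_i$, and since $t \le a_i$ and $\alpha^* \ge 1$, we conclude $\alpha = (f_i-a_i)/S_i \le O(\alpha^*/\eps)$. The main obstacle is getting the per-class error in the load-balancing bound to be additive and geometric in $2^{k'}$: an upper bound on $U^x_{=k'}$ alone is not enough, since without a matching lower bound on $U^x_{=k'}(t^-)$ one would be forced to use $V_{k'}^{[0,f_i]}$ in place of $V_{k'}[t,f_i]$, and the resource augmentation calculation would fail.
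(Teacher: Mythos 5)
Your proof is correct and reaches the same $O(1/\eps)$ bound, but it closes the argument by a genuinely different route than the paper. Both proofs start from the witness $J_i$ on machine $x$, a backward-extended busy interval $[t,f_i]$, and the fact that the min-load dispatch rule keeps the per-class assigned volumes $U^y_{=k'}$ within one job size of each other (this is exactly Observation~\ref{observation}). From there the paper aggregates over all $m$ machines: it needs \lemref{processbound} and \lemref{closebound} to transfer the balance from assigned volume $U$ to processed volume $P$ and residual volume $R$, uses \corref{optarrival} to relate $\opt$'s residual volume to $\mmug$'s, and concludes that $\opt$ still has $\Omega(\eps m(f_i-t))$ volume of class-$\le k$ work left at time $f_i$ (which forces a separate case when $f_i-a_i\le 2^{k+4}$). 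You instead stay on machine $x$, bound the volume \emph{dispatched} to $x$ during $[t,f_i]$ by $V_{\le k}[t,f_i]/m$ plus a geometric additive error, and bound $V_{\le k}[t,f_i]$ by $\opt$'s capacity on the window extended to $f_i+2\alpha^*S_i$ using the deadline constraint directly. This avoids the $P$/$R$ lemmas entirely and absorbs the additive error via $\alpha^*\ge 1$ rather than a case split, which is arguably cleaner; your closing remark about needing the two-sided bound on $U^x_{=k'}$ at both endpoints is exactly the right point. One small wording fix: you want $t$ to be the \emph{smallest} (infimum) time such that $x$ is busy on slack-$\le S_i$ requests throughout $[t,f_i]$, not the largest, so that minimality gives the empty-queue condition just before $t$ (the paper's single-machine proof has the same slip).
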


We need a fair amount of notation. For each time $t$, machine $x$, and
class $k$ we define several quantities. For example $U^x_{=k}(t)$ is
the total volume assigned to machine $x$ in class $k$ by time $t$. We
use the predicate ``$\le k$'' to indicate classes $1$ to $k$.  Thus $U^x_{\le
  k}(t)$ is the total volume assigned to machine $x$ in classes $1$ to
$k$. We let $R^x_{=k}(t)$ to denote the remaining processing time on
machine $x$ at time $t$ and let $P^x_{=k}(t)$ denote the total volume
that $x$ has finished on requests in class $k$ by time $t$. Note that
$P^x_{=k}(t) = U^x_{=k}(t) - R^x_{=k}(t)$. All these quantities refer
to the algorithm $\mmug$. We use $V^*_{=k}(t)$ and $V_{=k}(t)$ to
denote the remaining volume of requests in class $k$ in an optimal
offline algorithm with speed $1$ and $\mmug$ with speed $(1+\eps)$,
respectively. Observe that $V_{=k}(t) = \sum_x R^x_{=k}(t)$.  The
quantities $V^*_{\le k}(t)$ and $V_{\le k}(t)$ are defined
analogously.

The algorithm $\mmug$ balances the amount of processing time for
requests with similar slack.  Note that the assignment of requests
is not based on the current volume of unfinished requests on the
machines, rather the assignment is based on the volume of requests
that were assigned in the past to different machines.  We begin
our proof by showing that the volume of processing time of
requests less than or equal to some slack class is almost
the same on the different machines at any time. Several of these
lemmas are essentially the same as in \cite{AvrahamiA03}.

\begin{observation}
\label{observation} For any time $t$ and two machines $x$ and $y$,
$|U^{x}_{=k}(t) - U^{x}_{=k}(t)| \leq 2^{k+1}$.  This also implies that
$|U^{x}_{\leq k}(t) - U^{x}_{\leq k}(t)| \leq 2^{k+2}$.
\end{observation}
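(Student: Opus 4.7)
The plan is to first establish the per-class bound and then derive the $\le k$ bound by a telescoping sum over slack classes. Note that the statement has a typo: it should read $|U^{x}_{=k}(t) - U^{y}_{=k}(t)| \le 2^{k+1}$.

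The key observation for the per-class bound is that $U^x_{=k}(t)$ tracks \emph{assigned} volume, not remaining volume, so it is monotone nondecreasing in $t$ and only changes at the arrival of a class-$k$ request. Thus it suffices to verify, by induction on the arrival sequence, the invariant $\max_x U^x_{=k}(t) - \min_x U^x_{=k}(t) \le 2^{k+1}$. A class-$k$ request $J_i$ satisfies $S_i < 2^{k+1}$ and, by the standing assumption $\ell_i \le S_i$, we have $\ell_i < 2^{k+1}$ as well.

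For the inductive step, let $m = \min_y U^y_{=k}(t^-)$ and $M = \max_y U^y_{=k}(t^-)$ just before the arrival, with $M - m \le 2^{k+1}$ by induction. The dispatching rule of $\mmug$ assigns the new class-$k$ request of length $\ell < 2^{k+1}$ to a machine $x^\star$ with $U^{x^\star}_{=k}(t^-) = m$, so immediately after the arrival only $U^{x^\star}_{=k}$ changes, becoming $m + \ell$. The new minimum is at least $m$ and the new maximum is $\max(M, m + \ell)$. If $m + \ell \le M$, the spread is still at most $M - m \le 2^{k+1}$; if $m + \ell > M$, the spread is at most $(m + \ell) - m = \ell < 2^{k+1}$. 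Either way the invariant is preserved.

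For the $\le k$ bound, I would write
\[
|U^x_{\le k}(t) - U^y_{\le k}(t)| \;=\; \Bigl|\sum_{j \le k}\bigl(U^x_{=j}(t) - U^y_{=j}(t)\bigr)\Bigr| \;\le\; \sum_{j \le k} 2^{j+1} \;\le\; 2^{k+2},
\]
using the per-class bound and the fact that the sum is a geometric series dominated by twice its largest term. The only real obstacle is the inductive case split above, which is short; the second claim is then a one-line consequence.
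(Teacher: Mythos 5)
Your proof is correct and takes essentially the same route as the paper's: the per-class bound rests on the min-load dispatch rule together with the fact that class-$k$ requests have length less than $2^{k+1}$ (via $\ell_i \le S_i < 2^{k+1}$), and the cumulative bound is the geometric sum $\sum_{j \le k} 2^{j+1} \le 2^{k+2}$. The paper compresses both steps into one line each; you have simply made the induction and the case split explicit, and you are right that the statement contains a typo (the second $x$ in each inequality should be $y$).
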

\begin{proof}
  The first inequality holds since all of the requests of class $k$
  are of size $\leq 2^{k+1}$.  The second inequality follows easily
  form the first.
\end{proof}

Proofs of the next two lemmas can be found in the appendix.

\begin{lemma}
  \lemlab{processbound} Consider any two machines $x$ and $y$.  The
  difference in volume of requests that have already have been
  processed is bounded as $|P^{x}_{\leq k} (t) - P^{y}_{\leq k} (t)|
  \leq 2^{k+2}$.
\end{lemma}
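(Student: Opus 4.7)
The plan is to reduce the bound on the processed-volume imbalance to the assignment-volume imbalance already controlled by the Observation. Without loss of generality assume $P^x_{\le k}(t) \ge P^y_{\le k}(t)$. The key structural fact about $\mmug$ is that each machine runs $\sug$ internally, and since every class-$\le k$ request has slack strictly less than $2^{k+1}$ while every request in a class $j>k$ has slack at least $2^{k+1}$, each machine processes class-$\le k$ work whenever its queue of pending class-$\le k$ requests is nonempty; higher-class work is preempted immediately.

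With that priority fact in hand, the proof proceeds by fixing an appropriate reference time. Let $t_0$ be the most recent time $t_0 \le t$ at which machine $y$'s queue of pending class-$\le k$ requests was empty (using $t_0 = 0$ with the base case $P^y_{\le k}(0) = U^y_{\le k}(0) = 0$ when no such instant exists after startup). At $t_0$, every class-$\le k$ request assigned to $y$ by that point has been completed, so $P^y_{\le k}(t_0) = U^y_{\le k}(t_0)$. If $t_0 = t$ then I get the bound directly: $P^x_{\le k}(t) \le U^x_{\le k}(t) \le U^y_{\le k}(t) + 2^{k+2} = P^y_{\le k}(t) + 2^{k+2}$, where the middle inequality is the Observation. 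Otherwise $t_0 < t$, and by the choice of $t_0$ machine $y$ has pending class-$\le k$ work throughout $(t_0,t]$ and therefore processes such work at its full rate $(1+\eps)$; hence $P^y_{\le k}(t) = U^y_{\le k}(t_0) + (1+\eps)(t-t_0)$. Since machine $x$ can process class-$\le k$ work on $(t_0,t]$ at rate at most $(1+\eps)$, I get $P^x_{\le k}(t) \le P^x_{\le k}(t_0) + (1+\eps)(t-t_0) \le U^x_{\le k}(t_0) + (1+\eps)(t-t_0)$, and applying $U^x_{\le k}(t_0) \le U^y_{\le k}(t_0) + 2^{k+2}$ from the Observation closes the argument.

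The step I expect to require the most care is justifying the continuity of $y$'s class-$\le k$ service on $(t_0,t]$: I must invoke the $\sug$ priority rule to rule out any gap in which $y$ sits idle or processes class-$j$ work with $j>k$ while class-$\le k$ work is pending. Once that is pinned down, the remainder is a routine combination of constant-rate processing and the assignment-balance bound, following the template used for $L_p$-norm analyses in \cite{AvrahamiA03,ChekuriGKK04}.
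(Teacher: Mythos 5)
Your proof is correct and rests on the same two pillars as the paper's own argument: the $\sug$ priority rule forces $P^y_{\le k}(t) = U^y_{\le k}(t)$ whenever machine $y$ has no pending class-$\le k$ work, and Observation~\ref{observation} then caps the resulting gap. The paper packages this as a contradiction at the first instant the difference reaches $2^{k+2}$ (at which moment $x$ is serving class-$\le k$ work while $y$'s low-class queue must be empty), whereas you run the same two facts forward from the last time $y$'s low-class queue was empty; the two formulations are interchangeable.
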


\begin{lemma}
  \lemlab{closebound} At any time $t$ the difference between the
  residual volume of requests that needs to be processed, on any two
  different machines, $x$ and $y$ is bounded as $|R^{x}_{\leq k} (t) -
  R^{y}_{\leq k}(t)| \leq 2^{k+3}$.
\end{lemma}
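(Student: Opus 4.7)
The plan is to derive the bound by combining the two preceding results via the identity that links assigned volume, processed volume, and residual volume. Concretely, for every machine $x$ and class index $k$ the definitions give $R^{x}_{\le k}(t) = U^{x}_{\le k}(t) - P^{x}_{\le k}(t)$, since every unit of work assigned to $x$ in classes $\le k$ has either already been executed or is still pending.

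Using this identity I would write
\[
R^{x}_{\le k}(t) - R^{y}_{\le k}(t) \;=\; \bigl(U^{x}_{\le k}(t) - U^{y}_{\le k}(t)\bigr) \;-\; \bigl(P^{x}_{\le k}(t) - P^{y}_{\le k}(t)\bigr),
\]
and then apply the triangle inequality. The first difference is bounded by $2^{k+2}$ by \obsref{observation}, and the second difference is bounded by $2^{k+2}$ by \lemref{processbound}. Adding these gives $|R^{x}_{\le k}(t) - R^{y}_{\le k}(t)| \le 2^{k+3}$, which is exactly what we want.

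Since both ingredients are already in hand, there is essentially no obstacle: the whole argument is a two-line triangle inequality once one writes $R = U - P$. The only thing worth double-checking is that the identity $P^{x}_{\le k}(t) = U^{x}_{\le k}(t) - R^{x}_{\le k}(t)$ extends from a single class (where it is stated in the text) to the ``$\le k$'' aggregation, but this is immediate by summing over classes $1,\dots,k$.
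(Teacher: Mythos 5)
Your proposal is correct and follows exactly the paper's own argument: write $R^{x}_{\le k}(t) = U^{x}_{\le k}(t) - P^{x}_{\le k}(t)$, apply the triangle inequality, and invoke \obsref{observation} and \lemref{processbound} to bound each term by $2^{k+2}$. Nothing further is needed.
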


\begin{corollary}
  \corlab{optarrival} At any time $t$, $V^*_{\leq k} (t) \geq V_{\leq
    k} (t) - m2^{k+3}$.
\end{corollary}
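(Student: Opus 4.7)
The plan is to pick a reference time $t^* \le t$ at which $\mmug$'s backlog on class $\le k$ is small, account for the class-$\le k$ work done by both $\mmug$ and the optimum on $[t^*,t]$, and then subtract the two resulting volume equations. Concretely, I would define $t^*$ to be the latest time $\le t$ at which $V_{\le k}(t^*) \le m\cdot 2^{k+3}$; such a $t^*$ always exists since $V_{\le k}(0)=0$, so at worst $t^*=0$.

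The heart of the argument is to show that throughout $(t^*,t]$ \emph{every} machine under $\mmug$ is busy on a class-$\le k$ request. Fix $s\in (t^*,t]$. By the choice of $t^*$ we have $V_{\le k}(s)=\sum_x R^x_{\le k}(s) > m\cdot 2^{k+3}$, so the average of the $R^x_{\le k}(s)$ strictly exceeds $2^{k+3}$, hence the maximum does as well. By \lemref{closebound} the spread across machines is at most $2^{k+3}$, so every $R^y_{\le k}(s)$ is strictly positive. Because $\sug$ always picks the pending request of smallest slack, a machine with any pending class-$\le k$ request must be processing one. Thus in $(t^*,t]$ the algorithm $\mmug$ accomplishes exactly $m(1+\eps)(t-t^*)$ units of class-$\le k$ work, whereas the optimum, running $m$ speed-$1$ machines, accomplishes at most $m(t-t^*)$ such work.

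Letting $W$ denote the total class-$\le k$ volume arriving in $(t^*,t]$, the two volume balances read $V_{\le k}(t)=V_{\le k}(t^*)+W-m(1+\eps)(t-t^*)$ and $V^*_{\le k}(t)\ge V^*_{\le k}(t^*)+W-m(t-t^*)$. Subtracting and using $V^*_{\le k}(t^*)\ge 0$ together with $V_{\le k}(t^*)\le m\cdot 2^{k+3}$ yields $V^*_{\le k}(t)-V_{\le k}(t)\ge m\eps(t-t^*)-m\cdot 2^{k+3}\ge -m\cdot 2^{k+3}$, which is exactly the claimed inequality. The main obstacle is the per-machine busyness claim: converting the pairwise bound of \lemref{closebound} into the statement that \emph{every} machine has positive class-$\le k$ residual requires exactly the ``average exceeds spread'' tension at the chosen threshold $m\cdot 2^{k+3}$, and one has to be mildly careful about the jumps of $V_{\le k}$ at request-arrival instants when defining $t^*$ (right-continuous conventions handle this without changing any of the inequalities).
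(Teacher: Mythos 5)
Your argument is correct, and it supplies exactly the reasoning the paper omits: \corref{optarrival} is stated there with no proof at all, and the intended justification is precisely this kind of busy-interval volume accounting driven by \lemref{closebound}. The only difference from the paper's implicit route (which, in the main proof of Theorem~\ref{thm:multiple}, anchors the analysis at the last time a machine processed a class-$>k$ request, so that some machine has zero class-$\le k$ residual and \lemref{closebound} bounds the total backlog by $m2^{k+3}$) is your choice of reference time via the global volume threshold rather than a per-machine idleness condition; the two are interchangeable, and your ``average exceeds spread'' step correctly converts the pairwise bound into busyness of every machine. The one delicate point --- that the latest time with $V_{\le k}\le m2^{k+3}$ may be attained only as a left limit at an arrival instant --- you have already flagged and handled appropriately.
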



Now we get to the proof of the upper bound on \mmug, when given $(1+
\eps)$-speed, in a similar fashion to the single machine
case. Consider an arbitrary request sequence $\sigma$ and let $J_i$ be
the request that witnesses the delay factor $\alpha$ of $\mmug$ on $\sigma$.
Let $k$ be the class of $J_i$. Therefore $\alpha = (f_i - a_i)/S_i$.
Also, let $x$ be the machine on which $J_i$ was processed by $\mmug$.
We use $\alpha^*$ to denote the delay factor of some fixed optimal
offline algorithm that uses $m$ machines of speed $1$.

Let $t$ be the last time before $a_i$ when machine $x$ processed a request
of class $> k$. Note that $t \leq a_i$ since $x$ does not process any
request of class $> k$ in the interval $[a_i,f_i]$. At time $t$ we
know by \corref{optarrival} that $V^*_{\leq k} (t) \geq
V_{\leq k} (t) - m2^{k+3}$. If $f_i \le a_i + 2^{k+4}$ then $\mmug$
achieves a competitive ratio of $16$ since $J_i$ is in class
$k$. Thus we will assume from now on that $f_i > a_i + 2^{k+4}$.

In the interval $I = [t, f_i)$, $\mmug$ completes a total volume of of
$(1 + \eps)(f_i - t)$ on machine $x$. Using \lemref{processbound}, any
other machine $y$ also processes a volume of $(1+\eps)(f_i-t) -
2^{k+3}$ in $I$. Thus the total volume processed by $\mmug$ during $I$
in requests of classes $\le k$ is at least $m(1+\eps)(f_i-t) -
m2^{k+3}$.  During $I$, the optimal algorithm finishes at most
$m(f_i-t)$ volume in classes $\le k$.  Combining this with
\corref{optarrival}, we see that
\begin{eqnarray*}
V^*_{\le k}(f_i) & \ge & V_{\le k}(t) - m2^{k+3} + m(1+\eps)(f_i-t)  - m2^{k+3} \\
  & \ge & V_{\le k}(t) + m(1+\eps)(f_i-t) - m2^{k+4} \ge   \eps m (f_i - t).
\end{eqnarray*}
In the penultimate inequality we use the fact that $f_i-t \ge f_i -
a_i \ge 2^{k+4}$. Without loss of generality assume that no
requests arrive exactly at $f_i$. Therefore $V^*_{\le k}(f_i)$ is the
total volume of requests in classes $1$ to $k$ that the optimal
algorithm has left to finish at time $f_i$ and all these requests have
arrived before $f_i$. The earliest time that the optimal algorithm can
finish all these requests is by $f_i + \eps (f_i-t)$ and therefore it
follows that $\alpha^* \ge \eps (f_i - t)/2^{k+1}$. Since $\alpha \le (f_i-a_i)/2^k$ and $t \leq a_i$, it follows that $\alpha \le 2 \alpha^*/\eps$.

Thus $\alpha \le \max\{16, 2\alpha^*/\eps\}$ which finishes the proof
of Theorem~\ref{thm:multiple}.

\section{Broadcast Scheduling}
\label{sec:broadcast} We now move our attention to the broadcast
model where multiple requests can be satisfied by the transmission
of a single page.
Most of the literature in broadcast scheduling is concerned with the
case where all pages have the same size which is assumed to be unit. A
notable exception is the work of Edmonds and Pruhs
\cite{EdmondsP03}. Here we consider both the unit-sized as well as
arbitrary sized pages.

We start by showing that no $1$-speed online algorithm can be $(n/4)$-competitive for delay factor
where $n$ is the total number of unit-sized pages. We then show in
Section~\ref{broadcast-single} that there is a $(2+\eps)$-speed
$O(1/\eps^2)$-competitive algorithm for unit-sized pages. We prove
this for the single machine setting and it readily extends to the
multiple machine case. Finally,  we extend our algorithm and analysis to
the case of different page sizes to obtain a $(4+\eps)$-speed
$O(1/\eps^2)$-competitive algorithm in Section~\ref{subsec:varying}.
We believe that this can be extended to the multiple machine setting
but leave it for future work.

\begin{theorem}
  \label{broadcast-lower} Every $1$-speed online algorithm for broadcast
  scheduling to minimize the maximum delay factor is $\Omega(n)$-competitive
  where $n$ is number of unit-sized pages.
\end{theorem}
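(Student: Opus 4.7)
The plan is to construct an adversarial request sequence on which the offline optimum achieves constant delay factor while every online algorithm is forced into delay factor $\Omega(n)$. The proof is by exhibition and exploits the hallmark asymmetry of broadcast scheduling: a single transmission of page $p$ satisfies \emph{all} outstanding requests for $p$ simultaneously, so an omniscient scheduler can time transmissions to coincide with bursts of requests, whereas an online algorithm without lookahead may transmit too early (wasting the batching opportunity) or too late (missing deadlines of the burst).

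The adversary proceeds adaptively in two phases. In the first phase, at time $0$, the adversary releases an initial set of requests --- one per page --- with a common moderate slack $S$, forcing the online algorithm to commit to an ordering of transmissions before seeing what comes next. In the second phase, at a carefully chosen later time $\tau$, the adversary releases a large burst of requests, all for a \emph{single} target page $p^*$, each with very small slack. Crucially, $p^*$ is chosen based on the online algorithm's transmissions in $[0,\tau]$. Since the offline optimum knows $p^*$ and $\tau$ in advance, it defers its transmission of $p^*$ until just after $\tau$, so that a single transmission batches the entire burst (and also covers the initial request for $p^*$), and it fills the remaining slots with the other $n-1$ pages within their generous initial slack $S$. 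This will give OPT a delay factor of $O(1)$.

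The online algorithm, however, has no way to anticipate $p^*$. If it has already transmitted $p^*$ before $\tau$, then that transmission does not count for the burst (which arrived later) and a retransmission is required; if it has not, then its current commitments prevent it from transmitting $p^*$ immediately after $\tau$. By an adversary/pigeonhole argument over the $n$ possible target pages, no matter what order online chose, the adversary can pick $p^*$ so that online's completion of $p^*$ after the burst is delayed by $\Omega(n)$ time units relative to the burst's tiny slack, yielding delay factor $\Omega(n)$ on the burst requests.

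The hard part will be calibrating the parameters $S$, $\tau$, the burst size, and the burst slack so that (i) OPT genuinely has enough slack in $[0,\tau]$ to reorder the initial transmissions around the burst window while keeping every initial request's delay factor bounded, and (ii) the adversary argument in step two truly works against \emph{every} online algorithm, including ones that preempt or try to hedge by interleaving retransmissions. Showing (ii) requires a careful case analysis based on whether online has transmitted $p^*$ before $\tau$ or not, and arguing that in either case the online schedule cannot produce a transmission of $p^*$ within the burst's allowed window, so the ratio is at least $n/4$ as claimed.
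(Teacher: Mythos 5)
Your construction has a fatal flaw in Phase 2: a burst of requests all for a \emph{single} page $p^*$ cannot hurt any online algorithm with unit-sized pages. Whatever the online algorithm did before time $\tau$, at time $\tau$ it can simply transmit $p^*$ next; one unit-length transmission satisfies the entire burst, so the online algorithm batches the burst exactly as well as $\opt$ does and both incur essentially the same delay on it. There are no ``current commitments'' that block this --- with unit pages the scheduler is free to pick any page at each step --- so the pigeonhole argument over the choice of $p^*$ has nothing to bite on. (If instead you spread the burst for $p^*$ over a long interval with tiny slack, then $\opt$ can no longer cover it with one transmission either, and the ratio again collapses to $O(1)$.) More fundamentally, your sequence never forces the online algorithm to do asymptotically \emph{more transmissions} than $\opt$: both must transmit each of the $n$ pages about once, so there is no $\Omega(n)$ backlog anywhere to cash in.

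The ingredient you are missing is an adaptive phase that builds a \emph{work gap}. In the paper's proof the adversary requests $n/2$ pages at time $0$ with slack $n/2$, and then, each time the online algorithm transmits one of these pages, immediately re-requests that same page (with the same deadline). The online algorithm must transmit such a page twice, while $\opt$, knowing the second request is coming, waits and covers both with a single transmission; after $n/4$ steps the online algorithm still holds $n/4$ unsatisfied distinct-page requests while $\opt$'s queue is empty. The second phase then issues a long stream of slack-$1$ requests over $n/2$ \emph{new} pages in a cyclic pattern, so that consecutive requests for the same page are $n/2$ time units apart: merging any two of them already costs delay factor $n/2$, and serving them all separately forces the backlogged slack-$(n/2)$ requests (or the last tight requests) to wait $\Omega(n^2)$ time, giving delay factor $\Omega(n)$ either way. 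Without both the ``re-request what you just broadcast'' trick and the non-mergeable tight stream, the $\Omega(n)$ bound does not follow.
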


The proof of Theorem~\ref{broadcast-lower} can be found in the appendix.

\subsection{A Competitive Algorithm for Unit-sized Pages}
\label{broadcast-single}

We now develop an online algorithm, for unit-sized pages, that is
competitive given extra speed. It is easy to check that unlike in the
unicast setting, simple algorithms such as $\sug$ fail to be constant
competitive in the broadcast setting even with extra speed. The reason
for this is that any simple algorithm can be made to do an arbitrary
amount of ``extra'' work by repeatedly requesting the same page while
the adversary can wait and finish all these requests with a single
transmission. We use this intuition to develop a variant of $\sug$
that {\em adaptively} introduces waiting time for requests.  The
algorithm uses a single real-valued parameter $c < 1$ to control the
waiting period. The algorithm $\sbg$ ($\sug$ with waiting) is formally
defined below. We note that the algorithm is non-preemptive in that a
request once scheduled is not preempted. As we mentioned earlier, for
unit-sized requests, preemption is not very helpful. The algorithm
keeps track of the maximum delay factor it has seen so far,
$\alpha_t$, and this depends on requests that are yet to be completed
(we set $\alpha_0 = 1$). The important feature of the algorithm is
that it considers requests for scheduling only after they have waited
sufficiently long when compared to their {\em adaptive} slack.

\begin{center}
\begin{tabular}[r]{|c|}

\hline

\textbf{Algorithm}: \sbg \\

\\

\begin{minipage}{16cm}
\begin{itemize}
\item Let $\alpha_t$ be the maximum delay factor $\sbg$ has at
time $t$.

\item At time $t$,  let $Q(t) = \{ J_{(p,i)} \mid \mbox{ $J_{(p,i)}$ has not been satisfied and ${{t - a_{(p,i)} \over {\st_{p,i}}} } \geq c \alpha_t \st_{(p,i)}$} \}$.

\item If the machine is free at $t$, schedule the request in
$Q(t)$ with the smallest slack {\em non-preemptively}.
\end{itemize}
\end{minipage}\\

\hline
\end{tabular}
\end{center}
\vspace{4mm}

We now analyze $\sbg$ when it is given a $(2 + \eps)$-speed
machine. Let $\sigma$ be an arbitrary sequence of requests. Consider
the \emph{first} time $t$ where $\sbg$ achieves the maximum delay
factor $\alpha^{\sbg}$. At time $t$, $\sbg$ must have finished a
request $J_{(p,k)}$ which caused $\sbg$ to have this delay
factor. Hence, $\sbg$ has a maximum delay factor of $(f_{(p,k)} -
a_{(p,k)})/\st_{(p,k)}$ where $f_{(p,k)}$ is the time $\sbg$ satisfies
request $J_{(p,k)}$. We let $\opt$ denote some fixed offline optimum
algorithm and let $\alpha^*$ denote the optimum delay factor.

We now prove the most interesting difference between unicast and
broadcast scheduling. The following lemma shows that forcing a request
to wait in the queue, for a small period of time, can guarantee that
our algorithm is satisfying as many requests as $\opt$ by a single
broadcast unless $\opt$ has a similar delay factor.

Since $J_{(p,k)}$ defines $\alpha^{\sbg}$, we observe that from time
$t' = a_{(p,k)} + c(f_{(p,k)} - a_{(p,k)})$, the request
$J{(p,k)}$ is ready to be scheduled and hence the algorithm is
continuously busy in the interval $I = [t', f_{(p,k)}]$ processing
requests of slack no more than that of $J_{(p,k)}$.

\begin{lemma}
  \label{lem:main}
  Consider the interval $I = [t', f_{(p,k)})$. Suppose two distinct
  requests $J_{(x,j)}$ and $J_{(x,i)}$ for the same page $x$ were
  satisfied by $\sbg$ during $I$ at different times. If $\opt$
  satisfies both of these requests by a single broadcast then
  $\alpha^{\sbg} \le \frac{1}{c^2} \alpha^*$.
\end{lemma}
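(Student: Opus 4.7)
The plan is to prove the bound by exhibiting two separate ``wait penalties''---each contributing one factor of $1/c$---one incurred by $J_{(x,j)}$ before the first broadcast of page $x$ and another incurred by $J_{(p,k)}$ throughout the interval $I$. Without loss of generality assume $a_{(x,j)} < a_{(x,i)}$, and let $s_1 \in I$ denote the start time of the broadcast of page $x$ that $\sbg$ uses to satisfy $J_{(x,j)}$. Because this transmission does not also satisfy $J_{(x,i)}$, the latter request cannot yet have arrived, so $a_{(x,i)} > s_1$. The scheduling rule of $\sbg$ guarantees that $J_{(x,j)} \in Q(s_1)$, and the queue-admission condition then yields $s_1 - a_{(x,j)} \ge c\,\alpha_{s_1}\,S_{(x,j)}$.

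Next I bring in the hypothesis about $\opt$. Since $\opt$ serves both $J_{(x,j)}$ and $J_{(x,i)}$ with a single broadcast, that broadcast must begin at some time $s^{*} \ge a_{(x,i)}$, and hence the delay factor of $J_{(x,j)}$ under $\opt$ is at least $(s^{*} - a_{(x,j)})/S_{(x,j)} \ge (a_{(x,i)} - a_{(x,j)})/S_{(x,j)}$. Combining this with $a_{(x,i)} > s_1$ and the wait condition from the previous paragraph gives
\[
\alpha^{*} \;\ge\; \frac{a_{(x,i)} - a_{(x,j)}}{S_{(x,j)}} \;>\; \frac{s_1 - a_{(x,j)}}{S_{(x,j)}} \;\ge\; c\,\alpha_{s_1},
\]
so $\alpha_{s_1} < \alpha^{*}/c$. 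This produces the first factor of $1/c$.

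The second factor comes from the fact that $s_1$ lies in $I = [t',\,f_{(p,k)})$, where $t' = a_{(p,k)} + c\,\alpha^{\sbg}\,S_{(p,k)}$ by the definition of $\alpha^{\sbg}$. Consequently the current delay of $J_{(p,k)}$ at time $s_1$ satisfies $(s_1 - a_{(p,k)})/S_{(p,k)} \ge c\,\alpha^{\sbg}$; and since $J_{(p,k)}$ is still unfinished at $s_1$, the tracked quantity $\alpha_{s_1}$ must be at least this current delay factor. Therefore $\alpha_{s_1} \ge c\,\alpha^{\sbg}$, and chaining with the previous bound gives $c\,\alpha^{\sbg} \le \alpha_{s_1} < \alpha^{*}/c$, which rearranges to the required $\alpha^{\sbg} \le \alpha^{*}/c^{2}$.

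The main obstacle is justifying that $\alpha_t$, as maintained by $\sbg$, lower-bounds the current delay factor of every outstanding request at time $t$; this is exactly the adaptive feature of $\sbg$ that distinguishes it from pure shortest-slack-first and is what allows the second factor of $c$ to emerge. Once that property is in hand, the rest of the argument is a direct chase of the arrival-time inequalities $a_{(x,j)} < s_1 < a_{(x,i)} \le s^{*}$ together with the two $Q$-membership conditions identified above.
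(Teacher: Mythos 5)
Your proposal is correct and follows essentially the same argument as the paper: the waiting rule forces $J_{(x,j)}$ to have aged by $c\,\alpha\,S_{(x,j)}$ before its broadcast, the separate-broadcast hypothesis forces $a_{(x,i)}$ to exceed that broadcast's (start/finish) time, OPT's merging then delays $J_{(x,j)}$ past $a_{(x,i)}$, and the liveness of $J_{(p,k)}$ supplies the second factor of $c$ via $\alpha_{s} \ge c\,\alpha^{\sbg}$. The only cosmetic difference is that you anchor the inequalities at the start time $s_1$ of the broadcast rather than at $t'$ and $f_{(x,j)}$ as the paper does; the property of $\alpha_t$ you flag as needing justification is exactly how the paper defines $\alpha_t$ (a running maximum that accounts for still-pending requests), so it is definitional rather than a gap.
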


\begin{proof}
  Without loss of generality assume that $i >j$; therefore $a_{(x,j)}
  \le a_{(x,i)}$. Request $J_{(x,i)}$ must have arrived during $I$,
  otherwise $\sbg$ would have satisfied $J_{(x,i)}$ when it satisfied
  $J_{(x,j)}$. We observe that $\alpha_{t'} \ge \frac{c(f_{(p,k)} -
    a_{(p,k)})}{\st_{(p,k)}} \ge c \alpha^{\sbg}$ since $J_{(p,k)}$ is
  still alive at $t'$.

  Since $J_{(x,j)}$ was scheduled after $t'$, it follows that $\sbg$ would
  have made it wait at least $c \alpha_{t'} \st_{(x,j)}$ which implies
  that
  \[
  f_{(x,j)} \ge a_{(x,j)} + c \alpha_{t'} \st_{(x,j)}.
  \]
  Note that $\sbg$ satisfies $J_{(x,i)}$ by a separate broadcast from
  $J_{(x,j)}$ which implies that $a_{(x,i)} > f_{(x,j)}$. However, $\opt$
  satisfies both requests by the same transmission which implies
  that $\opt$ finishes $J_{(x,j)}$ no earlier than $a_{(x,i)}$. Therefore
  the delay factor of $\opt$ is at least the delay factor for $J_{(x,j)}$ in
  $\opt$ which implies that
\[
\alpha^* \ge \frac{a_{(x,i)} - a_{(x,j)}}{\st_{(x,j)}} \ge
\frac{f_{(x,j)} - a_{(x,j)}}{\st_{(x,j)}} \ge \frac{c \alpha_{t'}
  \st_{(x,j)}}{\st_{(x,j)}} \ge c \alpha_{t'} \ge c^2 \alpha^{\sbg}.
\]
\end{proof}

Note that previous lemma holds for any two requests scheduled by
$\sbg$ during interval $I$ regardless of when $\opt$ schedules them,
perhaps even after $f_{(p,k)}$.

\begin{lemma}
  \label{lem:arrivaltime}
  Consider the interval $I = [t', f_{(p,k)}]$. Any request which
  $\sbg$ scheduled during $I$ must have arrived after time $a_{(p,k)} -
  (1-c)(f_{(p,k)} - a_{(p,k)})$.
\end{lemma}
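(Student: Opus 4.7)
The plan is to exploit two features of any request $J_{(q,j)}$ that $\sbg$ starts during $I$: (i) its slack is bounded by $S_{(p,k)}$, since $\sbg$ picked $J_{(q,j)}$ in preference to $J_{(p,k)}$ under the minimum-slack rule, and (ii) its own delay factor under $\sbg$ is at most $\alpha^{\sbg}$, by definition of the maximum. Combining (i) and (ii) will give a direct algebraic lower bound on $a_{(q,j)}$ in terms of $t'$ and $S_{(p,k)}$, which rearranges to the claim.

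For (i), I would first verify that $J_{(p,k)}$ sits in $Q(\tau)$ for every $\tau \in I$: it is not satisfied until $f_{(p,k)}$, and the waiting condition $\tau - a_{(p,k)} \ge c\alpha_\tau S_{(p,k)}$ holds already at $\tau = t'$ because $t' - a_{(p,k)} = c\alpha^{\sbg} S_{(p,k)} \ge c\alpha_\tau S_{(p,k)}$ (using $\alpha_\tau \le \alpha^{\sbg}$ at every time), and hence at every later $\tau \in I$ as well. The minimum-slack rule then forces $S_{(q,j)} \le S_{(p,k)}$ whenever $\sbg$ selects $J_{(q,j)}$ at some $\tau \in I$ in the presence of $J_{(p,k)}$.

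For (ii), I would use $(f_{(q,j)} - a_{(q,j)})/S_{(q,j)} \le \alpha^{\sbg}$, i.e., $a_{(q,j)} \ge f_{(q,j)} - \alpha^{\sbg} S_{(q,j)}$. Since $J_{(q,j)}$ is started at some $\tau \in [t', f_{(p,k)}]$ and runs non-preemptively, $f_{(q,j)} \ge \tau \ge t'$. Substituting $t' = a_{(p,k)} + c(f_{(p,k)} - a_{(p,k)})$ and using $\alpha^{\sbg} S_{(q,j)} \le \alpha^{\sbg} S_{(p,k)} = f_{(p,k)} - a_{(p,k)}$ from step (i), the inequality collapses to $a_{(q,j)} \ge a_{(p,k)} - (1-c)(f_{(p,k)} - a_{(p,k)})$, which is the stated bound. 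I do not anticipate a real obstacle; the only mild subtlety is the meaning of ``scheduled during $I$'' for a non-preemptive request whose processing straddles the boundary of $I$, but reading it as ``starts at some $\tau \in I$'' (consistent with the continuous-busy description preceding the lemma) makes the argument go through without any extra machinery beyond the minimum-slack rule and the definition of $\alpha^{\sbg}$.
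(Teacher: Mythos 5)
Your proposal is correct and is essentially the paper's argument run forward rather than by contradiction: the paper assumes an early arrival and derives that the offending request would witness a delay factor of at least $\alpha^{\sbg}$ before $J_{(p,k)}$ does, whereas you use the equivalent fact that every request satisfies $(f_{(q,j)}-a_{(q,j)})/\st_{(q,j)} \le \alpha^{\sbg}$ together with $f_{(q,j)} \ge t'$ and $\st_{(q,j)} \le \st_{(p,k)}$ to bound $a_{(q,j)}$ directly. The supporting observation that every request scheduled in $I$ has slack at most $\st_{(p,k)}$ is exactly the one the paper records just before Lemma~\ref{lem:main}, so no new ideas are needed.
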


\begin{proof}
  For sake of contradiction, assume that a request $J_{(x,j)}$
  scheduled by $\sbg$ on the interval $I$ has arrival time less than
  $a_{(p,k)} - (1-c)(f_{(p,k)} - a_{(p,k)})$. Since $\sbg$ finishes
  this request during $I$, $f_{(x,j)} \ge a_{(p,k)} + c(f_{(p,k)} - a_{(p,k)})$.
  Also, as we observed before, all requests scheduled during $I$ by
  $\sbg$ have slack no more than that of $J_{(p,k)}$ which implies
  that $\st_{(x,j)} \le \st_{(p,k)}$. However this implies that
  the delay factor of $J_{(x,j)}$ is at least
    \begin{eqnarray*}
      \frac{(f_{(x,j)} - a_{(x,j)})}{\st_{(x,j)}} & \ge & \frac{a_{(p,k)} + c(f_{(p,k)} - a_{(p,k)}) - (a_{(p,k)} - (1-c)(f_{(p,k)} - a_{(p,k)}))}{\st_{(x,j)}} \\
        & \ge & \frac{(f_{(p,k)} - a_{(p,k)})}{\st_{(x,j)}}
        ~ \ge ~ \frac{(f_{(p,k)} - a_{(p,k)})}{\st_{(p,k)}} ~ \ge ~ \alpha^{\sbg}.
      \end{eqnarray*}

      This is a contradiction to the fact that $J_{(p,k)}$ is the
      first request that witnessed the maximum delay factor of \sbg.
\end{proof}

\medskip

Now we are ready to prove the competitiveness of $\sbg$.

\begin{lemma}
  \lemlab{eps-ceps-c-competitive} The algorithm $\sbg$ when given a
  $(2+\eps)$-speed machines satisfies $\alpha^{\sbg} \le
  \max\{\frac{1}{c^2}, \frac{1}{\eps - c\eps - c}\} \alpha^*$.
\end{lemma}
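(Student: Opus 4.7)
The plan is to split into two cases via the hypothesis of \lemref{main}. Setting $L = f_{(p,k)} - a_{(p,k)}$ so that $\alpha^{\sbg} = L/\st_{(p,k)}$, I first handle the easy case: if any two distinct requests for the same page are scheduled by $\sbg$ at different times inside $I = [t', f_{(p,k)})$ but merged by $\opt$ into a single broadcast, then \lemref{main} directly delivers $\alpha^{\sbg} \le \alpha^*/c^2$, matching the first term of the max.

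In the remaining case I will lower-bound the number of broadcasts $\opt$ is forced to perform. Because $\sbg$ runs on unit-sized pages at speed $2+\eps$ and is continuously busy throughout the length-$(1-c)L$ interval $I$, it performs $B \approx (2+\eps)(1-c)L$ broadcasts in $I$. I will select one distinct witness request $r_j$ from each $\sbg$-broadcast, yielding $B$ distinct witnesses. Any two same-page witnesses, being served at distinct times by $\sbg$ during $I$, cannot have been merged by $\opt$ (else we would be in Case~1), so $\opt$ needs at least $B$ distinct broadcasts for $\{r_1,\ldots,r_B\}$; different-page witnesses trivially require distinct $\opt$-broadcasts.

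Next I will apply \lemref{arrivaltime}: every witness arrives at or after $a_{(p,k)} - (1-c)L$. Hence even working continuously from that time on a $1$-speed machine, $\opt$ cannot finish the last of its $B$ broadcasts before $a_{(p,k)} - (1-c)L + B = a_{(p,k)} + (1+\eps)(1-c)L$. Let $r^*$ denote the witness $\opt$ finishes last. Its arrival is at most $f_{(p,k)} = a_{(p,k)} + L$ and its slack is at most $\st_{(p,k)}$, so its delay factor in $\opt$ is at least $((1+\eps)(1-c) - 1)L/\st_{(p,k)} = (\eps - c\eps - c)\,\alpha^{\sbg}$. Rearranging gives $\alpha^{\sbg} \le \alpha^*/(\eps - c\eps - c)$, matching the second term. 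Combining the two cases yields the lemma.

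The main delicate step is the witness extraction in Case~2: I must guarantee that the chosen $B$ witnesses really do force $\opt$ to execute $B$ different broadcasts, which is where the pairwise appeal to \lemref{main} is essential and where the waiting rule of $\sbg$ is actually being exploited. The partial-broadcast boundary effects at the endpoints of $I$ only perturb $B$ by an $O(1)$ additive term, which can be absorbed into the final ratio; the competitive factor itself then emerges from the clean arithmetic identity $(1+\eps)(1-c)-1 = \eps - c\eps - c$, so the bound is tight in this analysis only when $\eps(1-c) > c$.
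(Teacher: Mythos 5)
Your proposal is correct and follows essentially the same route as the paper's own proof: the identical case split via Lemma~\ref{lem:main}, the same count of $(2+\eps)(1-c)(f_{(p,k)}-a_{(p,k)})$ broadcasts performed by $\sbg$ during $I$, and the same use of Lemma~\ref{lem:arrivaltime} to push $\opt$'s earliest possible completion past $f_{(p,k)} + (\eps - c - c\eps)(f_{(p,k)}-a_{(p,k)})$. Your explicit one-witness-per-broadcast selection merely makes rigorous the paper's terser assertion that, absent any merging, $\opt$ must transmit as many pages as $\sbg$ did during $I$.
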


\begin{proof}
The number of broadcasts which $\sbg$ transmits during the
interval $I = [t', f_{(p,k)}]$ is
\[
(2 + \epsilon)(f_{(p,k)} - t') \ge (2+\eps)(1-c)(f_{(p,k)} - a_{(p,k)}).
\]

From Lemma~\ref{lem:arrivaltime}, all the requests processed during
$I$ have arrived no earlier than $a_{(p,k)} - (1-c)(f_{(p,k)} -
a_{(p,k)})$. Also, each of these requests has slack no more than
$\st_{(p,k)}$. We restrict attention to the requests satisfied by $\sbg$
during $I$. We consider two cases

First, if there are two requests for the same page that $\sbg$ satisfies
via distinct broadcasts but $\opt$ satisfies using one broadcast, then
by Lemma~\ref{lem:main}, $\alpha^{\sbg} \le \frac{1}{c^2}\alpha^*$ and we
are done.

Second, we assume that $\opt$ does not merge two requests for the same
page whenever $\sbg$ does not do so. It follows that $\opt$ also has
to broadcast $(2+\eps)(1-c)(f_{(p,k)} - a_{(p,k)})$ pages to satisfy
the requests that $\sbg$ did during $I$. Since these
requests arrived no earlier than $a_{(p,k)} - (1-c)(f_{(p,k)} - a_{(p,k)})$,
$\opt$, which has a $1$-speed machine, can finish them at the earliest
by
\[
(2+\eps)(1-c)(f_{(p,k)} - a_{(p,k)}) + a_{(p,k)} - (1-c)(f_{(p,k)} - a_{(p,k)}) \ge f_{(p,k)} + (\eps - c - c\eps)(f_{(p,k)} - a_{(p,k)}).
\]

Since each of these requests has slack at most $\st_{(p,k)}$ and arrived
no later than $f_{(p,k)}$, we have that
\begin{eqnarray*}
  \alpha^* & \ge & (f_{(p,k)} + (\eps - c - c\eps)(f_{(p,k)} - a_{(p,k)}) - f_{(p,k)})/\st_{(p,k)} \\
  & \ge & (\eps - c - c\eps)(f_{(p,k)} - a_{(p,k)})/\st_{(p,k)}
  ~ \ge ~  (\eps - c - c\eps)\alpha^{\sbg}.
\end{eqnarray*}
\end{proof}

The previous lemma yields the following theorem.

\begin{theorem}
  With $c = \eps/2$, $\sbg$ is a $(2+ \eps)$-speed $O({1 \over
    {\eps^2}})$-competitive algorithm for minimizing the maximum delay
  factor in broadcast scheduling with unit-sized pages.
\end{theorem}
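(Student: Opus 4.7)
The plan is to derive this theorem as an immediate corollary of Lemma \lemref{eps-ceps-c-competitive}, which already gives $\alpha^{\sbg} \le \max\{1/c^2,\ 1/(\eps - c\eps - c)\}\cdot \alpha^*$ for any valid choice of $c$. First I would substitute $c = \eps/2$ into both terms of the maximum. The first term becomes $1/c^2 = 4/\eps^2$, which is already $O(1/\eps^2)$. For the second, I would compute
\[
\eps - c\eps - c \;=\; \eps - \tfrac{\eps^2}{2} - \tfrac{\eps}{2} \;=\; \tfrac{\eps(1-\eps)}{2},
\]
so $1/(\eps - c\eps - c) = 2/(\eps(1-\eps))$, which is $O(1/\eps)$ whenever $\eps$ is bounded away from $1$.

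To handle the edge case and conclude, I would assume without loss of generality that $\eps \le 1/2$: if the given $\eps$ exceeds $1/2$, then running $\sbg$ with parameter $\eps' = 1/2$ already gives a $(2+\eps')$-speed $O(1)$-competitive algorithm, and running the algorithm on a machine of the larger given speed $(2+\eps)$ can only improve matters. Under $\eps \le 1/2$ we have $1-\eps \ge 1/2$, so $2/(\eps(1-\eps)) \le 4/\eps \le 4/\eps^2$. Taking the maximum yields $\alpha^{\sbg} \le (4/\eps^2)\,\alpha^*$, which is the claimed $O(1/\eps^2)$ competitive ratio.

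No genuine obstacle is expected here, since all the real work has been done in Lemma \lemref{eps-ceps-c-competitive} and this theorem is a one-line plug-in. The only point requiring a bit of care is the choice $c = \eps/2$, which is essentially forced by the trade-off: $c$ must be $\Theta(\eps)$ so that $\eps - c\eps - c$ remains a positive constant fraction of $\eps$, and at the same time as large as possible so that $1/c^2$ is as small as possible; $c = \eps/2$ balances the two sides of the maximum up to constants.
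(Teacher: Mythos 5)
Your proposal is correct and matches the paper's (implicit) argument: the paper derives this theorem directly from Lemma~\ref{lemma:eps-ceps-c-competitive} by plugging in $c=\eps/2$, exactly as you do. Your extra care with the edge case $\eps$ near $1$ (where $\eps - c\eps - c = \eps(1-\eps)/2$ degenerates) is a small but legitimate refinement that the paper glosses over.
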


It may appear that $\sbg$ needs knowledge of $\eps$. However, another
way to interpret \lemref{eps-ceps-c-competitive} is that for any fixed
constant $c$, $\sbg$ with parameter $c$ is constant competitive in all
settings where its machine is at least $(2+2\sqrt{c})$ times the speed
of the optimal algorithm. Of course, it would be ideal to have an
algorithm scales with $\eps$ without any knowledge of $\eps$. We leave
the existence of such an algorithm for future work.

Now consider having $m$ machines where we have $(1+ \eps)$-speed.
Since we are using unit time requests, this is analogous to $\opt$
having one $m$-speed machine and $\sbg$ having a $(m(2 +
\eps))$-speed machine.  Thus, one can extend the above analysis to
the multiple machine setting with unit-sized pages in a
straight forward fashion.

\subsection{Varying Page Sizes}
\label{subsec:varying}
In this section we generalize our algorithm for unit-sized pages to
the setting where each page has potentially a different page size. We
let $\ell_p$ denote the length of page $p$. In this setting we allow
preemption of transmissions. Suppose the transmission of a page $p$ is
started at time $t_1$ and ends at time $t_2$; $p$ may be preempted for
other transmissions and hence $t_2-t_1 \ge p$.  A request for a page
$p$ is satisfied by the transmission of $p$ during the interval $[t_1,
t_2]$ only if the request arrives before $t_1$. It is possible that
the transmission of a page $p$ is abandoned and {\em restarted} due to
the arrival of a new request for $p$ with a smaller slack.  This may
lead to further wasted work by the algorithm and increases the
complexity of the analysis. Here we show that a natural adaptation of
$\sbg$ is competitive even in this more general setting if it is given $(4 + \eps)$-speed.

We outline the details of modifications to $\sbg$. As before, at any time $t$, the algorithm considers broadcasting a
request $J_{(p,i)}$ if ${{t - a_{(p,i)}}\over{\st_{(p,i)}}} \geq c
\alpha_t \st_{(p,i)}$; these are requests that have waited long
enough.  Among these requests, the one with the smallest slack is
scheduled. Note that the waiting is only for requests that have not
yet been started; any request that has already started transmission is
available to be scheduled.  The algorithm breaks ties arbitrarily, yet
ensures that if a request $J_{(p,k)}$ is started before a request
$J_{(p',j)}$ then $J_{(p,k)}$ will be finished before request
$J_{(p',j)}$. Note that the algorithm may preempt a request
$J_{(p,i)}$ by another request $J_{(p,k)}$ for the same page $p$ even
though $i < k$ if $\st_{(p,k)} < \st_{(p,i)}$. In this case the
transmission of $J_{(p,i)}$ is effectively abandoned. Note that
transmission of a page $p$ may be repeatedly abandoned.

We now analyze the algorithm assuming that it has a $(4+\eps)$-speed
advantage over the optimal offline algorithm. The extra factor in
speed is needed in our analysis to handle the extra wasted work due to
potential retransmission of a page $p$ after a large portion of it has
already been transmitted.  As before, let $\sigma$ be a sequence of
requests and let $t$ be the \emph{first} time $\sbg$ achieves the
maximum delay factor $\alpha^{\sbg}$.  At time $t$, it must be the
case that a request $J_{(p,k)}$ was finished which caused $\sbg$ to
have his maximum delay factor.  Hence, $\sbg$ has a maximum delay
factor of $(f_{(p,k)} - a_{(p,k)}/ \st_{(p,k)})$ where $f_{(p,k)}$ is
the time $\sbg$ satisfied request $J_{(p,k)}$.

As with the case with unit time requests, at time $t' = a_{(p,k)}
+ c(f_{(p,k)} - a_{(p,k)})$ the request $J_{(p,k)}$ is ready to be
scheduled and the algorithm is busy on the interval $I=[t',
f_{(p,k)}]$ processing requests of slack at most $\st_{(p,k)}$.

We say that a request $J_{(p,i)}$ is {\em started} at time $t$ if $t$
is the first time at which the algorithm picked $J_{(p,i)}$ to
transmit its page. Multiple requests may be waiting for the same page
$p$ but only the request with the smallest slack that is picked by the
algorithm is said to be started. Thus a request may be satisfied
although it is technically not started. Also, a request $J_{(p,i)}$
that is started may be abandoned by the start of another request for
the same page.

The lemma below is analogous to Lemma~\ref{lem:main} but requires
a more careful statement since requests may now be started and
abandoned.

\begin{lemma}
  \label{mainvarying} Consider two distinct requests $J_{(x,j)}$ and
  $J_{(x,i)}$ for the same page $x$ where $i>j$ such that they are
  both satisfied by $\opt$ via the same transmission. If $\sbg$ starts
  $J_{(x,j)}$ in $[t', f_{(p,k)}]$ before the arrival of $J_{(x,i)}$,
  then  $\alpha^{\sbg} \leq {1 \over c^2}\alpha^*$.
\end{lemma}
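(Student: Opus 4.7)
The plan is to mimic the unit-size argument (\lemref{main}) but apply the waiting condition at the moment $\sbg$ \emph{starts} $J_{(x,j)}$ rather than at the moment it finishes, since under varying page sizes the two moments can differ (and the request could even be preempted later).

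First, I would re-derive the key inequality $\alpha_{t'} \ge c\,\alpha^{\sbg}$. This is the same observation used in the unit-size analysis: the witness request $J_{(p,k)}$ is still outstanding at time $t' = a_{(p,k)} + c(f_{(p,k)}-a_{(p,k)})$, so the measured delay factor $\alpha_{t'}$ already accounts for a $c(f_{(p,k)}-a_{(p,k)})/\st_{(p,k)} = c\,\alpha^{\sbg}$ contribution from $J_{(p,k)}$. Moreover, $\alpha_t$ is nondecreasing in $t$, so $\alpha_s \ge \alpha_{t'} \ge c\,\alpha^{\sbg}$ for every $s \in I$.

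Next, let $s \in I$ be the time at which $\sbg$ starts $J_{(x,j)}$. By definition of ``started,'' the algorithm selected $J_{(x,j)}$ from $Q(s)$, so $J_{(x,j)}$ must satisfy the waiting condition at time $s$; that is, $s - a_{(x,j)} \ge c\,\alpha_s\,\st_{(x,j)}$. Combining with the previous paragraph gives
\[
s - a_{(x,j)} \;\ge\; c\,\alpha_s\,\st_{(x,j)} \;\ge\; c^2\,\alpha^{\sbg}\,\st_{(x,j)}.
\]
By the hypothesis of the lemma, $J_{(x,i)}$ has not yet arrived at time $s$, so $a_{(x,i)} > s$, and therefore
\[
a_{(x,i)} - a_{(x,j)} \;>\; c^2\,\alpha^{\sbg}\,\st_{(x,j)}.
\]

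Finally, since $\opt$ satisfies $J_{(x,j)}$ and $J_{(x,i)}$ via the same transmission, that transmission can only complete after both requests have arrived, so $\opt$'s finish time for $J_{(x,j)}$ is at least $a_{(x,i)}$. Hence
\[
\alpha^* \;\ge\; \frac{a_{(x,i)} - a_{(x,j)}}{\st_{(x,j)}} \;>\; c^2\,\alpha^{\sbg},
\]
which rearranges to $\alpha^{\sbg} \le \tfrac{1}{c^2}\,\alpha^*$, as required. The main subtlety (and the reason the statement is phrased with ``starts $J_{(x,j)}$'' rather than ``satisfies $J_{(x,j)}$'') is precisely to make the waiting condition available at a time strictly earlier than $a_{(x,i)}$: in the varying-length setting a started request may be preempted and re-transmitted, so we cannot directly use its finish time the way \lemref{main} did. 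Using the start time $s$ sidesteps this issue entirely, and the rest of the argument is identical in spirit to the unit-size case.
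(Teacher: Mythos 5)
Your proof is correct and follows essentially the same route as the paper's: both establish $\alpha_{t'} \ge c\,\alpha^{\sbg}$ from the fact that $J_{(p,k)}$ is alive at $t'$, apply the waiting condition at the start time of $J_{(x,j)}$ (using monotonicity of $\alpha_t$), and conclude via $a_{(x,i)} - a_{(x,j)} \ge c^2\alpha^{\sbg}\st_{(x,j)}$ together with $\opt$'s merged transmission finishing no earlier than $a_{(x,i)}$. Your closing remark about why the start time (rather than the finish time) is the right quantity in the varying-size setting matches the paper's intent exactly.
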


Observe that the request $J_{(x,j)}$ may be satisfied together with
$J_{(x,i)}$ even though it starts before the arrival of $J_{(x,i)}$.

\begin{proof}
  As before, $\alpha_{t'} \ge \frac{c(f_{(p,k)} -
    a_{(p,k)})}{\st_{(p,k)}} \ge c \alpha^{\sbg}$ since $J_{(p,k)}$ is
  still alive at $t'$. Since $J_{(x,j)}$ is started after $t'$, it
  follows that $\sbg$ would have made it wait at least $c \alpha_{t'}
  \st_{(x,j)}$. Let $t \ge t'$ be the start time of
  $J_{(x,j)}$. Therefore $t \ge a_{(x,j)} + c \alpha_{t'}
  \st_{(x,j)}$. By our assumption, $t < a_{(x,i)}$ and therefore
  $ a_{(x,i)} > a_{(x,j)} + c \alpha_{t'}\st_{(x,j)}$.

  Since $\opt$ satisfies these two requests by the same transmission,
  the finish time of $J_{(x,j)}$ in $\opt$ is at least $a_{(x,i)}$.
  Therefore,

\[
\alpha^* \ge \frac{a_{(x,i)} - a_{(x,j)}}{\st_{(x,j)}} \ge \frac{c
\alpha_{t'}  \st_{(x,j)}}{\st_{(x,j)}} \ge c \alpha_{t'} \ge c^2
\alpha^{\sbg}.
\]
\end{proof}

The proof of the lemma below is very similar to that
of Lemma~\ref{lem:arrivaltime}.
\begin{lemma}
  \label{lem:arrivaltimevarying}
  Consider the interval $I = [t', f_{(p,k)}]$. Any request which is
  alive with slack $\leq S_{(p,k)}$, but unsatisfied by $\sbg$ at time $t'$ must have arrived after
  time $a_{(p,k)} - (1-c)(f_{(p,k)} - a_{(p,k)})$.
\end{lemma}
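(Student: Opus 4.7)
The plan is to prove the lemma by contradiction, adapting the argument from Lemma~\ref{lem:arrivaltime} so that it reasons about a request \emph{alive at $t'$} rather than one \emph{satisfied during $I$}. Suppose, for contradiction, that some request $J_{(x,j)}$ has slack $\st_{(x,j)} \le \st_{(p,k)}$, is unsatisfied by $\sbg$ at time $t'$, and has arrival time $a_{(x,j)} < a_{(p,k)} - (1-c)(f_{(p,k)} - a_{(p,k)})$. Let $f_{(x,j)}$ denote the (eventual) time at which $\sbg$ satisfies $J_{(x,j)}$; since $J_{(x,j)}$ is unsatisfied at $t'$, we have $f_{(x,j)} \ge t' = a_{(p,k)} + c(f_{(p,k)} - a_{(p,k)})$.

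The key step is then a direct computation of the delay factor of $J_{(x,j)}$ under $\sbg$. Using $f_{(x,j)} \ge t'$, the strict upper bound on $a_{(x,j)}$, and $\st_{(x,j)} \le \st_{(p,k)}$, one obtains
\[
\frac{f_{(x,j)} - a_{(x,j)}}{\st_{(x,j)}} \;>\; \frac{t' - \bigl(a_{(p,k)} - (1-c)(f_{(p,k)} - a_{(p,k)})\bigr)}{\st_{(p,k)}} \;=\; \frac{f_{(p,k)} - a_{(p,k)}}{\st_{(p,k)}} \;=\; \alpha^{\sbg}.
\]
This strictly exceeds $\alpha^{\sbg}$, contradicting the choice of $J_{(p,k)}$ as the \emph{first} request witnessing the maximum delay factor $\alpha^{\sbg}$ of $\sbg$ on $\sigma$.

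I do not anticipate any serious obstacle here: because the argument only uses $a_{(x,j)}$ and a lower bound on $f_{(x,j)}$, it is entirely insensitive to the preemption and abandonment machinery that distinguishes the varying-page-sizes setting from the unit-sized case. The only mild care needed is the observation that ``alive and unsatisfied at $t'$'' automatically gives $f_{(x,j)} \ge t'$; everything else is the same one-line calculation as in the unit-sized analog.
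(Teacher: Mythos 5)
Your proposal is correct and follows essentially the same route as the paper's own argument (which appears, commented out, in the source and is nearly identical): assume a too-early arrival, use ``unsatisfied at $t'$'' to lower-bound the finish time by $t'$, bound the slack by $\st_{(p,k)}$, and derive a delay factor exceeding $\alpha^{\sbg}$, contradicting the choice of $J_{(p,k)}$. No gaps.
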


Now we are ready to prove the competitiveness of $\sbg$.
Although the outline of the proof is similar to that of
\lemref{eps-ceps-c-competitive}, it requires more careful reasoning
to handle the impact of abandoned transmissions
of pages. Here is where we crucially rely on the speed of $(4+\eps)$.

\begin{lemma}
  \lemlab{eps-ceps-c-competitivevarying}
  The algorithm $\sbg$ when given a $(4+\eps)$-speed machine
  satisfies $\alpha^{\sbg} \le \max\{\frac{1}{c^2}, \frac{2}{(\eps - c\eps - c)}\} \alpha^*$.
\end{lemma}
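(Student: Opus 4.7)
My plan is to mirror the proof of \lemref{eps-ceps-c-competitive} and use the extra factor of two in the speed (going from $2+\eps$ to $4+\eps$) to absorb the work $\sbg$ wastes on abandoned transmissions. Set $I = [t', f_{(p,k)}]$ with $t' = a_{(p,k)} + c(f_{(p,k)} - a_{(p,k)})$, so $|I| \ge (1-c)(f_{(p,k)}-a_{(p,k)})$, and throughout $I$ the algorithm is busy on requests of slack at most $\st_{(p,k)}$; the total broadcast volume (counting partial broadcasts later abandoned) is $(4+\eps)|I|$. I would then split into two cases, parallel to the unit-sized proof. In Case~1, some pair $J_{(x,j)}, J_{(x,i)}$ with $i>j$ is merged by $\opt$ and $\sbg$ starts $J_{(x,j)}$ in $I$ before $J_{(x,i)}$ arrives, and \lemref{mainvarying} directly yields $\alpha^{\sbg} \le (1/c^2)\alpha^*$. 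In Case~2, no such pair exists, which means that each distinct completed broadcast of $\sbg$ during $I$ forces $\opt$ to devote a distinct broadcast of the same page to cover the corresponding requests. In particular, if $W_u$ denotes the total length of broadcasts $\sbg$ completes in $I$, then $\opt$ must perform at least $W_u$ units of transmission work to serve the same requests.

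The main obstacle is lower-bounding $W_u$. Because a transmission of a page $p$ may be preempted mid-broadcast by the arrival of a later request for $p$ of smaller slack, part of the $(4+\eps)|I|$ total volume is wasted. I would charge each abandoned broadcast to the immediately following restart of the same page within its page-phase: each abandonment wastes strictly less than $\ell_p$, while each phase ends in a completed broadcast that contributes exactly $\ell_p$ of useful work. Carrying this charging out across phases shows that the wasted work is at most the useful work, giving $W_u \ge \frac{1}{2}(4+\eps)|I| = (2+\eps/2)(1-c)(f_{(p,k)}-a_{(p,k)})$, up to a small additive correction that can be absorbed into the final bound. Handling phases in which many successive abandonments occur is the delicate point: one has to verify that stringing together all the intermediate preemptions still only overcounts the useful work of the terminating completed broadcast by at most a constant factor.

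To finish, I combine the $W_u$ bound with \lemref{arrivaltimevarying}, which says every request served by these completed broadcasts arrived no earlier than $a_{(p,k)} - (1-c)(f_{(p,k)}-a_{(p,k)})$. Since $\opt$ has unit speed, it cannot serve all of them before
\[
a_{(p,k)} - (1-c)(f_{(p,k)}-a_{(p,k)}) + W_u \;\ge\; f_{(p,k)} + \frac{\eps - c\eps - c}{2}\bigl(f_{(p,k)}-a_{(p,k)}\bigr),
\]
where the inequality packages the same arithmetic as in the unit-sized proof. Since each of these requests has slack at most $\st_{(p,k)}$ and arrived by $f_{(p,k)}$, dividing the excess over $f_{(p,k)}$ by $\st_{(p,k)}$ yields $\alpha^* \ge \frac{\eps - c\eps - c}{2}\,\alpha^{\sbg}$, so $\alpha^{\sbg} \le \frac{2}{\eps - c\eps - c}\,\alpha^*$. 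Combining with Case~1 gives the claimed maximum.
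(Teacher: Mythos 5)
Your overall architecture matches the paper's: the same interval $I=[t',f_{(p,k)}]$, the same two-case split via \lemref{mainvarying}, the same use of \lemref{arrivaltimevarying}, and the same final arithmetic. But the step where you lower-bound the useful work $W_u$ is a genuine gap, and it is exactly the point you flag as ``delicate.'' Your charging scheme assumes that each abandoned transmission can be charged against a completed one so that wasted work is at most useful work, giving $W_u \ge \frac{1}{2}(4+\eps)|I|$. This fails: a single page $x$ can be started and abandoned $r-1$ times in succession (each abandonment triggered by a newly arrived request for $x$ with strictly smaller slack that becomes eligible just before the current transmission finishes), with only the $r$-th attempt completing. Nothing in the algorithm bounds $r$, and each abandoned attempt can waste up to $(1-\delta)\ell_x$. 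Then the wasted work is roughly $(r-1)\ell_x$ against useful work $\ell_x$, so the ratio of total volume to completed volume is unbounded and $W_u$ can be a vanishing fraction of $(4+\eps)|I|$. Separately, your premise that every completed $\sbg$ broadcast forces a distinct $\opt$ broadcast is also slightly too strong: the first transmission of a page may have \emph{started before} $t'$, in which case \lemref{mainvarying} does not forbid $\opt$ from merging it with the next one.

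The paper's proof avoids the problem by never trying to lower-bound $\sbg$'s \emph{useful} work at all; it lower-bounds $\opt$'s \emph{forced} work directly against $\sbg$'s \emph{total} (wasted plus useful) work. Fix a page $x$ and let $J_{(x,i_1)},\ldots,J_{(x,i_r)}$ be the requests whose starts cause $\sbg$'s transmissions of $x$ during $I$. Each $J_{(x,i_{h+1})}$ arrives only after $J_{(x,i_h)}$ has started, so if $\opt$ merged any two of $J_{(x,i_2)},\ldots,J_{(x,i_r)}$ (all started inside $I$), \lemref{mainvarying} would apply and we would be in your Case~1. Otherwise $\opt$ must transmit $x$ at least $r-1$ separate times, i.e., volume $(r-1)\ell_x$, while $\sbg$'s total volume on $x$ during $I$, \emph{including all abandoned partial transmissions}, is at most $r\ell_x$; hence $\opt$'s volume is at least half of $\sbg$'s total volume $(4+\eps)(1-c)(f_{(p,k)}-a_{(p,k)})$. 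The conceptual point your proposal misses is that the very requests that cause $\sbg$ to waste work are the ones that certify $\opt$ must do proportionally much work, so the many-abandonment scenario that kills your charging argument is precisely the scenario in which $\opt$ is also busy. Substituting $\opt$'s forced volume for your $W_u$ in your final display recovers the stated bound.
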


\begin{proof}
  We consider the set of requests satisfied by $\sbg$ during the
  interval $I = [t', f_{(p,k)}]$. All of these requests have slack at
  most $\st_{(p,k)}$, and from Lemma~\ref{lem:arrivaltimevarying} and
  the property of the algorithm, have arrived no earlier than
  $a_{(p,k)} - (1-c)(f_{(p,k)} - a_{(p,k)})$. Since $\sbg$ is busy throughout
  $I$, the volume of broadcasts it transmits during $I$
  is $(4 + \epsilon)(f_{(p,k)} - t') \ge (4+\eps)(1-c)(f_{(p,k)} -
  a_{(p,k)})$.

  We now argue that either Lemma~\ref{mainvarying} applies in which
  case $\alpha^{\sbg} \le \alpha^*/c^2$, or $\opt$ has to transmit
  a comparable amount of volume to that of $\sbg$.

  Fix a page $x$ and consider the transmissions for $x$ that $\sbg$
  does during $I$. Let $J_{(x,i_1)}, J_{(x,i_2)}, \ldots, J_{(x,i_r)}$
  be distinct requests for $x$ which cause these
  transmissions. Amongst these, only $J_{(x,i_1)}$ may have started
  before $t'$, the rest start during $I$.  Note that we are not
  claiming that these transmissions are satisfied separately; some of
  them may be preempted and then satisfied together. Observe that if
  $J_{(x,i_h)}$ starts at some time $t$ then it implies that no
  request $J_{(x,i_{h'})}$ for $h' > h$ has arrived by time $t$.
  Therefore by Lemma~\ref{mainvarying}, if $\opt$ satisfies any two of
  these requests that $\sbg$ started in $I$ by the same transmission,
  $\alpha^{\sbg} \le \alpha^*/c^2$ and we are done.

  Otherwise, $\opt$ satisfies each of $J_{(x,i_2)}, \ldots,
  J_{(x,i_r)}$ by separate transmissions. (If $J_{(x,i_1)}$ was
  started by $\sbg$ before $t'$, $\opt$ could satisfy $J_{(x,i_1)}$
  and $J_{(x,i_2)}$ together and we would not be able to invoke
  Lemma~\ref{mainvarying}). Therefore if $r \ge 2$ then the total
  volume of transmissions that $\opt$ does to satisfy these requests
  for page $x$ is at least $(r-1)\ell_x$ while $\sbg$ does at most
  $r\ell_x$.  If $r = 1$ then both $\opt$ and $\sbg$ transmit page $x$ once for its entire page length. In either case, the total volume of transmissions
  that $\opt$ does is at least half those of $\sbg$. Since $x$ was
  arbitrary, it follows that the total number of transmissions that
  $\opt$ does to satisfy requests that $\sbg$ satisfies during $I$
  is at least $\frac{1}{2} (4+\eps)(1-c)(f_{(p,k)} -  a_{(p,k)})$.

  From Lemma~\ref{lem:arrivaltimevarying}, all the requests that
  $\sbg$ processes during $I$ arrived no earlier than $a_{(p,k)} -
  (1-c)(f_{(p,k)} - a_{(p,k)})$. Since $\opt$ has a $1$-speed machine,
  it follows that $\opt$ can finish these requests only by time
  \[
  {1 \over 2}(4+\eps)(1-c)(f_{(p,k)} - a_{(p,k)}) + a_{(p,k)} -
  (1-c)(f_{(p,k)} - a_{(p,k)}) \ge f_{(p,k)} + {1 \over 2}(\eps - c
  - c\eps)(f_{(p,k)} - a_{(p,k)}).
  \]
  Since each of these requests have slack at most $\st_{(p,k)}$ and arrive
  no later than $f_{(p,k)}$,

\[
  \alpha^*  \ge (f_{(p,k)} + {1 \over 2}(\eps - c - c\eps)(f_{(p,k)} - a_{(p,k)}) - f_{(p,k)})/\st_{(p,k)}  \ge  {1 \over 2}(\eps - c - c\eps) (f_{(p,k)} - a_{(p,k)})/\st_{(p,k)} \ge  {1 \over 2}(\eps - c - c\eps)\alpha^{\sbg}.
\]
\end{proof}

We thus obtain the following.

\begin{theorem}
  With $c = \eps/2$, $\sbg$ is a $(4+ \eps)$-speed $O({1 \over
    {\eps^2}})$-competitive algorithm for minimizing the maximum delay
  factor in broadcast scheduling with arbitrary page sizes.
\end{theorem}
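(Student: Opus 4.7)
The theorem is a direct consequence of Lemma \ref{lemma:eps-ceps-c-competitivevarying}, so the plan is simply to pick $c$ as a function of $\eps$ and verify that both quantities appearing in the maximum are $O(1/\eps^2)$.

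First, I would apply Lemma \ref{lemma:eps-ceps-c-competitivevarying} with $c = \eps/2$, which yields
\[
\alpha^{\sbg} \le \max\left\{\frac{1}{c^2},\ \frac{2}{\eps - c\eps - c}\right\}\alpha^*
\]
for $\sbg$ run on a $(4+\eps)$-speed machine. For the first term, $1/c^2 = 4/\eps^2$. For the second term, substituting $c = \eps/2$ gives
\[
\frac{2}{\eps - c\eps - c} \;=\; \frac{2}{\tfrac{\eps}{2} - \tfrac{\eps^2}{2}} \;=\; \frac{4}{\eps(1-\eps)},
\]
which for $\eps \in (0, 1/2]$ is at most $8/\eps = O(1/\eps)$.

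Hence the maximum is dominated by $4/\eps^2$, establishing that $\alpha^{\sbg} \le O(1/\eps^2)\,\alpha^*$. For $\eps \in (1/2, 1]$, one can simply run the algorithm with parameter $\eps' = 1/2$, which still satisfies $(4+\eps')$-speed $\le (4+\eps)$-speed, and inherits an $O(1)$ competitive ratio. This gives the claimed $(4+\eps)$-speed $O(1/\eps^2)$-competitiveness.

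I do not expect any real obstacle here beyond checking the arithmetic: the work is all contained in Lemma \ref{lemma:eps-ceps-c-competitivevarying}. The only subtlety is that $c$ must be chosen to balance two opposing requirements — the $1/c^2$ term prefers $c$ close to $1$, while the $2/(\eps - c\eps - c)$ term forces $c < \eps/(1+\eps)$ — and $c = \eps/2$ sits safely inside this range while keeping both terms of order $1/\eps^2$ or better.
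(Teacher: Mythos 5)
Your proposal is correct and matches the paper exactly: the theorem is stated as an immediate consequence of Lemma~\ref{lemma:eps-ceps-c-competitivevarying} by substituting $c=\eps/2$, and your arithmetic ($1/c^2 = 4/\eps^2$ and $2/(\eps-c\eps-c)=4/(\eps(1-\eps))$) checks out. Your extra care for $\eps$ near $1$ (where $\eps - c\eps - c$ degenerates) is a small refinement the paper silently skips, but it does not change the approach.
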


\section{Concluding Remarks}
In this paper we have initiated the study of online algorithms for
minimizing delay factor when requests have deadlines.  Our main result
is broadcast scheduling where the algorithm and analysis demonstrates
the utility of making requests wait. We hope that this and related
ideas are helpful in understanding other performance measures in the
broadcast setting. Particularly, can `waiting' combined with some
known algorithm, like most requests first, be used to improve the
current best known online algorithm for minimizing the average
response time? Another interesting problem is whether there is a
$(1+\eps)$-speed $O(1)$-competitive algorithm for delay factor.  Our
algorithm has a parameter that controls the waiting time. Is there an
algorithm that avoids taking an explicit parameter and ``learns'' it
along the way?

\bigskip
\noindent
{\bf Acknowledgments:} We thank Samir Khuller for
clarifications on previous work and for his encouragement.

\bibliographystyle{alpha}
\bibliography{DelayFactor}

\appendix

\section{Omitted Proofs}

\subsection{Proof of Theorem~\ref{thm:unicast_lb}}

\begin{proof}
For sake of contradiction, assume that some algorithm that
achieves a competitive ratio better than  ${{P^{.4}} \over
2}$ exists. Now consider the following example.

\begin{quotation}
\noindent \texttt{Type 1}:  At time $0$ let the client request a
page with processing time and deadline $P$.  This request
has slack $P$.
\end{quotation}

\begin{quotation}
\noindent \texttt{Type 2}:  At times $P - P^{.6},
P, P + P^{.6}, \ldots, P^{1.16} - P^{.6}$
let the client request a page with processing time $P^{.6}$
and a deadline $P^{.6}$ time units after its arrival time.
These requests have slack $P^{.6}$. \noindent
\end{quotation}

Consider time $P^{1.16}.$  Assume that this is all of the
requests which the client makes.  The optimal solution schedules
these requests in a first in first out fashion.  The optimal
schedule finishes request type 1 by its deadline.  The requests of
type 2 then finish at $P^{.6}$ time units after their
deadline. Thus, the delay factor for the optimal schedule is
$2P^{.6}/P^{.6} = 2$.

The maximum ratio of maximum to minimum slack values seen so far
is ${P \over P^{.6}} = P^{.4}$.  Thus, the maximum
delay factor our algorithm can have is $(P^{.4})^{.4} / 2 =
P^{.16}/2$. Consider having the request of type 1 still in
the deterministic algorithms queue.  At time $P^{1.16}$, the
algorithm has achieved a delay factor of at least ${P^{1.16}
\over P} = P^{.16}$.  Thus, the algorithm has a
competitive ratio of at least $P^{.16} \over 2$, a
contradiction. Therefore, at time $P^{1.16}$ the algorithm
must have finished the request of type 1.  Now, immediately after
this time, requests of type 3 arrive.

\begin{quotation}
\noindent \texttt{Type 3}: Starting at time $P^{1.16}$ the
client requests $P^{1.2} - P^{.6}$ unit processing time
requests each with a deadline one time unit after their arrival
time. These requests arrive one after another, each time unit. The
slack of these requests is 1.
\end{quotation}

These are all of the requests which are sent.  The optimal
solution schedules the request of type 1 until time $P^{.4}$,
thus has $P^{.6}$ processing time left to finish this
request. Then the optimal solution schedules the type 2 and type 3
requests as they arrive, giving them a delay factor of 1.  At time
$P^{1.16} + P^{1.2} - P^{.6}$ the optimal solution
schedules the request of type 1 to completion.  Thus delay factor
of this solution is ${{P^{1.16} + P^{1.2}} \over
{P}} \leq 2P^{.2}$.

Our algorithm must have scheduled the request of type 1 by time
$P^{1.16}$.  Thus the last request it finishes is either of
type 2 or type 3.  If the request is of type 2 then this request
must have waited for all requests of type 3 to finish along with
its processing time, thus the delay factor is at least
${{P^{1.2} + P^{.6}} \over {P^{.6}}} \geq
P^{.6}$.  If the last request satisfied by the algorithm is
of type 3, then this request must have waited for a request of
type 2 to finish, so the delay factor is at least $P^{.6}$.
In either case, the competitive ratio of the algorithm is at least
${{P^{.6}} \over {2 P^{.2}}} = {{P^{.4}} \over
{2}}$, a contradiction.

\end{proof}

\subsection{Proof of \lemref{processbound}}
\begin{proof}
  Suppose the lemma is false. Then there is a first time $t_0$ when
  $P^{x}_{\leq k} (t_0) - P^{y}_{\leq k} (t_0) = 2^{k+2}$ and small
  constant $\delta t > 0$ such that $P^{x}_{\leq k} (t_0 + \delta t) -
  P^{y}_{\leq k} (t_0 + \delta t) > 2^{k+2}$. Let $t' = t_0+\delta t$.
  For this to occur, $x$ processes a request of class $\le k$ during
  the interval $I = [t_0, t']$ while $y$ processes a request of class
  $> k$. Since each machine uses $\sug$, it must be that $y$ had no
  requests in classes $\le k$ during $I$ which implies that $U^y_{\le
    k}(t') = P^y_{\le k}(t')$. Therefore,

\[
U^{y}_{\leq k} (t') = P^{y}_{\leq k} (t') < P^{x}_{\leq k} (t') - 2^{k+2}
\le U^{x}_{\leq k} (t') - 2^{k+2},
\]
since $P^{x}_{\leq k} (t') \le U^{x}_{\leq k} (t')$. However, this implies
that
\[
U^{y}_{\leq k} (t') < U^{x}_{\leq k} (t')  - 2^{k+2},
\]
a contradiction to Observation \ref{observation}.
\end{proof}

\subsection{Proof of \lemref{closebound}}

\begin{proof}
Combining Observation \ref{observation}, \lemref{processbound},
and the fact that $R(t) = U(t) - P(t)$ by definition then,

\[
|R^{x}_{\leq k} (t) -  R^{y}_{\leq k}(t)| \le
|U^{x}_{\leq k} (t) -  U^{y}_{\leq k}(t)| + |P^{x}_{\leq k} (t) -  P^{y}_{\leq k}(t)| \le 2^{k+3}.
\]
\end{proof}

\subsection{Proof of Theorem~\ref{broadcast-lower}}
\begin{proof}
  Let $A$ be any online $1$-speed algorithm. We consider the following
  adversary. At time $0$, the adversary requests pages $1, \ldots, {n
    \over 2}$, all which have a deadline of $n \over 2$. Between time
  $1$ and $n \over 4$ the client requests whatever page the online
  algorithm $A$ broadcasts immediately after that request is
  broadcast; this new request also has a deadline of $n \over 2$. It
  follows that at time $t = {n \over 2}$ the online algorithm $A$ has
  $n \over 4$ requests for distinct pages in its queue. However, the
  adversary can finish all these requests by time $n \over 2$. Then
  starting at time ${n \over 2}$ the adversary requests $n \over 2$
  new pages, say ${n \over 2} +1, \ldots, n$.  These new pages are
  requested, one at each time step, in a cyclic fashion for $n^2$
  cycles.  More formally, for $i = 1, \ldots, n/2$, page ${n \over 2}
  + i$ is requested at times $j\cdot({n \over 2}) +i - 1$ for $j = 1,
  \ldots, n$. Each of these requests has a slack of one which means
  that their deadline is one unit after their arrival.  The adversary
  can satisfy these requests with delay since it has no queue at any
  time; thus its maximum delay factor is $1$.  However, the online
  algorithm $A$ has $n \over 4$ requests in its queue at time $n \over
  2$; each of these has a slack of $n \over 2$. We now argue that the
  delay factor of $A$ is $\Omega(n)$. If the algorithm satisfies two
  slack $1$ requests for the same page by a single transmission, then
  its delay factor is $n/2$; this follows since the requests for the
  same page are $n/2$ time units apart. Otherwise, the algorithm does
  not merge any requests for the same page and hence finishes the
  the last request by time $n/2 + n^2/2 + n/4$. If the
  last request to be finished is a slack $1$ request, then its delay
  factor is at least $n/4$ since the last slack $1$ requests is released
  at time $n/2 + n^2/2$. If the last request to be finished is
  one of the requests with slack $n/2$, then its delay factor
  is at least $n^2/2/(n/2) = \Omega(n)$.
\end{proof}
\end{document}